\theoremstyle{definition}
\newtheorem{theorem}{Theorem}
\newtheorem{lemma}{Lemma}
\algrenewcommand\algorithmicrequire{\textbf{Precondition:}}
\algrenewcommand\algorithmicensure{\textbf{Postcondition:}}
\begin{document}
	
	\title{Blind Signal Classification Analysis and Impact on User Scheduling and Power Allocation in Nonorthogonal Multiple Access}
	
	\author{
		Minseok Choi, \IEEEmembership{Member,~IEEE,}
		and Joongheon Kim, \IEEEmembership{Senior Member,~IEEE}
		%<-this % stops a space
		\thanks{M. Choi is with the School of Electrical Engineering, Korea Advanced Institute of Science and Technology, Daejeon, Korea (e-mail: ejaqmf@kaist.ac.kr).} % <-this % stops a space
		\thanks{J. Kim is with the School of Computer Engineering, Chung-Ang University, Seoul, Korea (e-mail: joongheon@cau.ac.kr).} % <-this % stops a space
		\thanks{J. Kim is the corresponding author of this paper.} % <-this % stops a space
	}
	
	\maketitle
	
	\begin{abstract}
		
		For a massive number of devices, nonorthogonal multiple access (NOMA) has been recognized as a promising technology for improving the spectral efficiency compared to orthogonal multiple access (OMA). 
		However, it is difficult for a base station (BS) to provide all of the information about NOMA signals via a high layer owing to signaling overhead concerns. %a concern regarding the signaling overhead.
		This paper studies blind signal classification, which determines whether or not the received NOMA signal requires successive interference cancellation (SIC) without a priori signal information.
		%When the BS knows the instantaneous channel gains and the NOMA signal is modulated by $M$-QAM, 
		In this paper, two types of blind signal classification errors are analyzed: 
		1) the signal is classified as one that does not require SIC on the SIC user side and 2) the signal is classified as one for which SIC is necessary on the non-SIC user side.
		%Effects of blind classification on user scheduling and power allocation are also investigated. 
		%On the basis of classification performance analysis, 
		In addition, we formulate the joint optimization problem for user scheduling and power allocation, which maximizes the sum-rate gain of NOMA over OMA with constraints on the maximum classification error probability and minimum data rate.
		The proposed algorithm iteratively finds solutions for user scheduling and power allocation.
		Simulation results show that the proposed scheme outperforms existing user scheduling methods.
		
	\end{abstract}
	
	% Note that keywords are not normally used for peerreview papers.
	\begin{IEEEkeywords}
		Nonorthogonal multiple access, User scheduling, Power allocation, Presence of interference, Blind signal classification
	\end{IEEEkeywords}
	
	\IEEEpeerreviewmaketitle
	
	\section{Introduction}
	\label{sec:Introduction}
	
	In next-generation communications, it is necessary to provide high data rates to a large number of heterogeneous wireless devices with limited resources~\cite{5G:Andrews:JSAC2014}.
	To achieve this goal, nonorthogonal multiple access (NOMA) has recently emerged as a promising technology to improve both the efficiency of resource utilization and the system performance in 5G networks~\cite{NOMA_5G:Dai:CommMag2015},\cite{NOMA-SystemPerf:Saito:PIMRC}. 
	Recently, 3GPP has also studied deployment scenarios and receiver designs for NOMA systems in Release 14 in the context of %a working item labeled multiple user superposition transmission 
	a work item called ``multiple user superposition transmission" (MuST) \cite{MuST:3GPP}.
	Orthogonal multiple access (OMA), mainly employed in 3G or 4G networks, allocates orthogonal resources to user terminals, thereby eliminating inter-user interference.
	On the other hand, NOMA superposes multiuser signals with the different power weights within the same frequency, time, and spatial domains.
	Thus, NOMA receivers should handle the interference from the superpositioned signals by successive interference cancellation (SIC) or maximum-likelihood (ML) detection.
	NOMA can provide significant benefits %in improving the 
	for cell throughput improvement compared to OMA with the assumption of ideal SIC~\cite{Book:Tse}.
	%In this respect, a concrete and systematic design of the constellation has also been proposed for NOMA users to efficiently decode their signals in \cite{NOMA-const:J-STSP2016Dong}.
	
	NOMA has been widely developed with other technologies in various environments.
	%The advantages of NOMA for randomly deployed users have been demonstrated in~\cite{NOMA_randomDeploy:Ding:SPL2014}.
	The application of MIMO to NOMA has also been investigated~\cite{MIMO-NOMA:Ding:TWC2016,MIMO-NOMA:Ding:TWC2016-2}, and the ergodic capacity of MIMO-NOMA has been derived~\cite{MIMO-NOMA:Sun:WCL2015}.
	%The precoding vector of multiple transmit antennas is designed to maximize the sum rate in \cite{MIMO-NOMA:TSP2016Hanif}.
	In addition, the data rate of a cell-edge user can be increased by using NOMA in a cooperative system \cite{NOMA-CoMP:Choi:CL2014} or a distributed antenna system \cite{NOMA-CoMP:ICC-Han}.
	Cooperative NOMA has been proposed in an environment where cooperation among users is possible~\cite{NOMA-Coop:Ding:CL2015,NOMA-Coop:Choi:ArXiv2018}. %and this has been combined with simultaneous wireless information and power transmission (SWIPT)~\cite{NOMA-Coop-SWIPT:Liu:JSAC2016,NOMA-SWIPT:TSP2017Xu}.
	The fairness among the scheduled NOMA users is also studied in \cite{NOMA:UserFairness:SPL-Timotheou}.

	Before an attempt is made to handle interference in NOMA signals on the receiver side, the receiver has to know some signal information in advance, e.g., 1) signal identification, i.e., whether the signal is modulated by OMA or NOMA; 2) modulation classification;
	%and power coefficient of the interference component 
	and 3) whether or not SIC is required for the received signal \cite{BlindSC:ArXiv2018Choi}.
	All of this signal information can be transmitted to the receiver via a high layer. 
	However, the required signaling overhead can be a concern, especially in cellular environments with a massive number of heterogeneous devices, e.g., Internet of Things (IoT) networks.
	Further, some high-layer signaling may or may not be used on the receiver side depending on the multiuser channel conditions. 
	In this case, high-layer signaling %of all information for the NOMA signal
	for NOMA signal information can be an extremely large waste of valuable resources.
	This motivates blind signal classification on the receiver side.
	3GPP has also discussed blind signal classification in NOMA systems %and high-layer signaling for information of NOMA signals 
	\cite{3GPP-TSG-RAN:Intel,3GPP-TSG-RAN:MediaTek}.
	However, to the best of the authors' knowledge, blind signal classification in NOMA systems has not been thoroughly investigated in a theoretical sense.
	
	On the basis of an ML-based classifier, which has been researched for OMA for a long time \cite{MC:TC2000Wei,MC:TWC2009Hameed,MC:CL2013Soltanmohammadi}, the performance of signal identification and modulation classification improves as the SNR increases.
	Therefore, the system is sufficient for scheduling users whose SNRs are larger than a certain threshold for guaranteeing reliable signal identification and blind modulation classification.
	As the reader will see in detail later, however, the performance of blind signal classification for the presence of interference even decreases with the SNR for the user who does not perform SIC.
	That is because the user who does not perform SIC has to determine that there is no interference in the received signal, even though the signals of multiple users are superposed at the transmitter.
	In this regard, more elaborate user scheduling and power allocation are necessary in a NOMA system where users perform blind signal classification for the presence of interference.
	Thus, this paper focuses on determining whether or not SIC is required on the user side, which can be regarded as blind signal classification for the presence of interference.
	
	User scheduling and resource allocation for NOMA have been extensively researched.
	The impact of user pairing on NOMA transmissions in a hybrid multiple access system, which allows both OMA and NOMA users, has been researched in \cite{NOMA-UserSchedule:Ding:TVT2016}, and the optimal user pairing for downlink NOMA was proposed in \cite{NOMA-US:WCL2018Kang}.
	%User pairing in NOMA has been also employed to utilize the resources of OMA users for spectral efficiency \cite{NOMA-US:SPL2016Shahab}. %or to apply the concept of quasi-degradation to the multiple-input single-output (MISO) NOMA downlink \cite{NOMA-US:TSP2016Chen}.
	The power allocation problem has also been extensively researched with fixed user pairing for NOMA \cite{NOMA:UserFairness:SPL-Timotheou,NOMA-PowerAllocation:Yang:TWC2016,NOMA-PowerAllocation:Choi:TWC2016,NOMA-PowerAllocation:Choi:CL2016,NOMA-PowerAllocation:Di:TWC2016,NOMA-ResourceAllocation:Fang:TC2016,NOMA-ResourceAllocation:Zhu:JSAC2017,NOMA-PA:SPL2016Cui}, but user scheduling is not considered.
	The authors of \cite{NOMA-PowerAllocation:Di:TWC2016,NOMA-ResourceAllocation:Fang:TC2016,NOMA-ResourceAllocation:Zhu:JSAC2017} focus on joint optimization of the subchannel assignment and power allocation but consider the situation where the base station (BS) already determines which users will be served by NOMA; thus, user scheduling for NOMA transmissions in a hybrid multiple access system has not been investigated.
	The joint optimization of power allocation and user scheduling for a NOMA system has been researched in \cite{NOMA-US:TC2017Liang, NOMA-US:TWC2018Cui, NOMA-US:WCL2018Zhu, ArXiv2018Choi}.
	The distributed matching algorithm was used for the optimal two-user schedulings and power allocation in \cite{NOMA-US:TC2017Liang}.
	The authors of \cite{NOMA-US:TWC2018Cui} studied the application of NOMA in millimeter wave communications, and the globally optimal two-user pairings and power allocation were studied in \cite{NOMA-US:WCL2018Zhu}.
	In \cite{ArXiv2018Choi}, a dynamic algorithm for user scheduling and power allocation is presented to maximize the sum rate while reducing the queuing delay.
	%In addition, an efficient user scheduling and beam design are developed for multiuser MISO-NOMA downlink in \cite{NOMA-US:TSP2018Seo}.
	However, all of the above studies do not consider blind signal classification in an NOMA system.
	
	The main contributions of this paper are as follows:
	\begin{itemize}
		\item Two types of incorrect blind signal classification for the presence of interference in two-user NOMA are analyzed. 
		%The first one is that 
		In the first type, the SIC user %determines itself not to perform SIC, 
		determines that it should not perform SIC, and in the second type, %is that 
		the non-SIC user classifies itself as the SIC user. 
		In addition, mathematical forms of the probabilities for these two errors are derived.
		
		\item The effects of blind signal classification for the presence of interference on user scheduling and power allocation in a hybrid multiple access system are investigated.
		The trade-off between the data rate and the classification error probability is studied, and we show that appropriate user scheduling and power allocation can control the trade-off.
		
		\item The joint optimization problem of user scheduling and power allocation in a hybrid multiple access system is formulated.
		We then solve the optimal power allocations given user scheduling, and iteratively find the appropriate user scheduling.
		
		\item We numerically investigate how many data samples should be used to guarantee reliable blind signal classification for the presence of interference.
		
		\item Our extensive simulation results show that the proposed scheme outperforms other existing user scheduling methods.
	\end{itemize}
	
	The rest of this paper is organized as follows. 
	In Section \ref{sec:system_model}, the system model is described.
	Section \ref{sec:ergodic_capacity_analysis} derives the mathematical forms of the classification error probabilities for the presence of interference.
	The joint optimization problem of user scheduling and power allocation is formulated, and an algorithm is proposed to solve this problem in Section \ref{sec:power_allocation}.
	Simulation results are presented in Section \ref{sec:simulation}.
	Finally, the conclusion follows in Section \ref{sec:conclusion}.
	
	$P\{.\}$ and $p(.)$ represent the probability of event occurrence and the probability density function of a random variable, respectively.
	
	%%%%%%%%%%%%%%%%%%%%%%%%%%%%%%%%%%%%%%%%%%%%%%%%%%%%%%%%%%%%%%%%%%%%%%%%%%%%%%%%%%%%%%%%%%%%%%%%%%%%%%%
	\section{System Model}
	\label{sec:system_model}
	
	\subsection{Cellular Model and Downlink NOMA Transmission}
	Consider downlink communications in a cellular model where a BS transmits signals to $K$ users simultaneously on the basis of hybrid multiple access. %without high-layer signaling, which informs the necessity of SIC use on the NOMA user side.
	We focus on the downlink scenario without any high-layer signaling that communicates the necessity of SIC use on the NOMA user side.
	%Suppose that there is no high-layer signaling which informs the necessity of SIC use on the NOMA user side.
	Therefore, NOMA users should perform blind signal classification for the presence of interference.
	Suppose that users $k$ and $n$ are scheduled for NOMA at the BS, $k,n\in\{1,\cdots,K\}$.
	The BS intentionally superposes the signals for both users with different power weights; thus, the signal received by user $i$ is given by 
	\begin{equation}
	y_i = h_i(s_k + s_n) + w_i,
	\end{equation}
	where $y,s,h,$ and $w$ correspond to the received signal, transmitted symbol, channel, and noise, respectively, and the subscripts $k,n,$ and $i\in \{k,n\}$ indicate the users' indices. 
	Denote $\gamma_k$ and $\gamma_n$ as the power allocation ratios for users $k$ and $n$, respectively; therefore,  $\mathrm{E}[|s_k|^2] = \gamma_k$. 
	In addition, a normalized power is assumed; i.e., $\gamma_k + \gamma_n = 1$. 
	Let $\chi_k$ and $\chi_n$ be the power weighted constellation sets of users $k$ and $n$, respectively; then, $s_k \in \chi_k$ and $s_n \in \chi_n$.
	Moreover, $s=s_k+s_n \in \chi = \chi_k \oplus \chi_n$, where $\chi$ is the composite constellation set of the superpositioned NOMA signal.
	$\chi = \chi_k \oplus \chi_n$ means that the set $\chi$ consists of the sums of all combinations of elements in $\chi_k$ and $\chi_n$.
	
	The Rayleigh fading channel from the BS to user $i$ is defined as $h_i = \sqrt{L_i}g_i$ for $i=1,\cdots,K$, where $L_{i}=1/d_i^2$ controls the path loss; $d_i$ is the BS-user-$i$ distance; and $g_i$ represents the fast fading component having a complex Gaussian distribution, $g_i \sim CN(0,1)$.
	Without loss of generality, assume that $|h_k|^2 \leq |h_n|^2$ and $|h_1|^2 \geq |h_2|^2 \geq \cdots \geq |h_K|^2$.
	In addition, suppose that the BS knows the instantaneous channel gains, and $w_i \sim CN(0,\sigma^2)$, where $\sigma^2$ is the normalized noise variance.
	
	\begin{figure} [t!]
		\centering
		\includegraphics[width=0.5\textwidth]{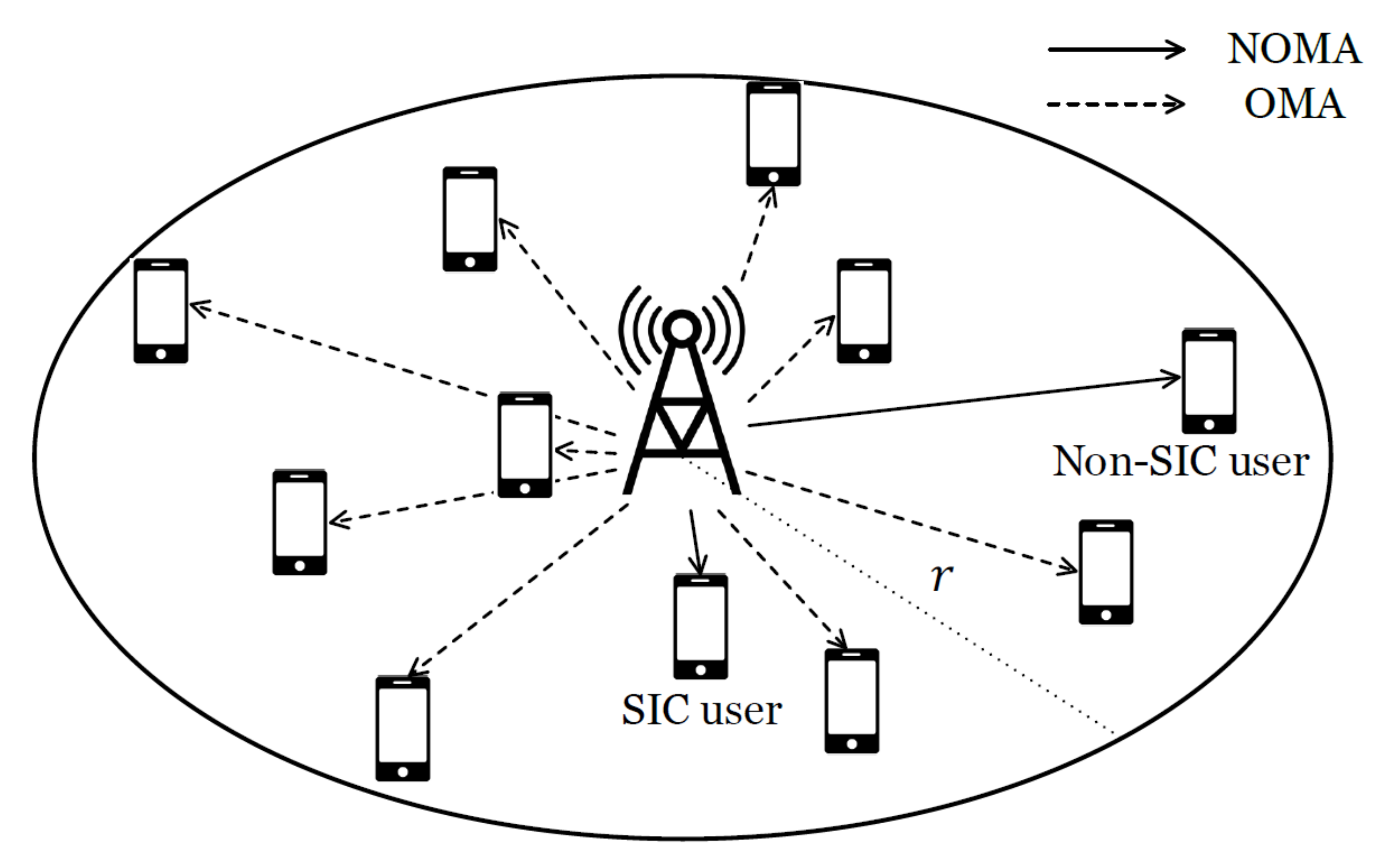}
		\caption{Cellular network model of two-user NOMA and multiuser OMA}
		\label{Fig:SystemModel}
	\end{figure}
	
	A larger power is usually allocated to the user with weak channel conditions in the NOMA system, $\gamma_k>\gamma_n$.
	With a large power allocation, user $k$ does not perform SIC and just decodes $s_k$ without cancellation of $s_n$. 
	Therefore, the data rate of user $k$ is given by
	\begin{equation}
	R_k = \log_2 \Big(1 + \frac{|h_k|^2\gamma_k}{|h_k|^2\gamma_n + \sigma^2}\Big).
	\label{eq:k-user data rate}
	\end{equation}
	Meanwhile, SIC is necessary for user $n$ to cancel user $k$'s signal component, $s_k$, and its SINR is $\frac{|h_n|^2\gamma_k}{|h_n|^2\gamma_n + \sigma^2}$. 
	However, $R_k$ remains the same as (\ref{eq:k-user data rate}) because $|h_k|^2 < |h_n|^2$.
	After performing SIC, the data rate of user $n$ becomes
	\begin{equation}
	R_n = \log_2 \Big(1 + \frac{|h_n|^2 \gamma_n}{\sigma^2}\Big).
	\label{eq:n-user data rate}
	\end{equation}
	
	%The data rate of user $i$ served by OMA for $i \neq k,n$ depends on the fractions of transmit power and orthogonal resources.
	OMA is the general baseline multiple access scheme; therefore, the fair allocation of frequency resources is reasonable in a hybrid multiple access system.
	Therefore, since a normalized frequency resource is assumed for two NOMA users, the data rate of user $i$ served by OMA is given by 
	\begin{equation}
	\tilde{R}_i = \frac{1}{2}\log_2 \Big( 1 + \frac{|h_i|^2}{\sigma^2} \Big).
	\end{equation}
	
	Since (\ref{eq:k-user data rate}) and (\ref{eq:n-user data rate}) depend on the power coefficients, finding the optimal power allocation ratios is as important as appropriate user scheduling.
	Thus, this paper considers the joint optimization of user scheduling and power allocation for an NOMA system to maximize the sum rate when all users perform blind signal classification.
	In a hybrid multiple access system, several user groups consisting of more than two users can be created for power-multiplexing NOMA.
	%However, this paper considers that only two users are chosen for NOMA transmissions and the others are served by OMA owing to the excessive complexity for blind signal classification as we will see later.
	However, this paper considers only two-user groupings for NOMA transmissions and assumes that other users are served by OMA, owing to the excessive complexity of blind signal classification, as we will see later.
	
	Here, the problem is how to schedule two users for NOMA transmissions among $K$ users while guaranteeing reliable blind signal classification for the presence of interference without any high-layer signaling.
	The multiple groups for NOMA can be generated in a hybrid multiple access system, but this paper focuses on scheduling only two users for NOMA transmissions among $K$ users, for simplicity. 
	The joint optimization of multiple user groupings and power allocations under the constraint of reliable blind signal classification is exceedingly complicated.
	In addition, even though it is not optimal, a suboptimal user scheduling policy can be found by iterative methods, e.g., the deferred acceptance procedure for stable matching \cite{deferred_acceptance}.
	In this paper, however, we leave the optimal method of making multiple user groups for NOMA as a future work.
	%We focus on the downlink scenario without any a priori information for the necessity of SIC use on the user side.
	%In this case, each user should determine whether or not SIC is required before decoding its data.
	
	\subsection{ML-Based Blind Signal Classification}
	
	The concept of ML-based modulation classification \cite{MC:TC2000Wei} is applied to blind signal classification for the existence of interference in this paper.
	For simplicity, we call the user for which SIC is necessary and the user who does not perform SIC as the SIC user and non-SIC user, respectively.
	Let two hypotheses, $\mathcal{H}_S$ and $\mathcal{H}_N$, represent the cases where the user who receives the signal is the SIC user and the non-SIC user, respectively.
	The likelihood probabilities of the SIC user and non-SIC user are then given by 
	\begin{align}
	p(y|\mathcal{H}_S) &= \sum_{s' \in \chi} p(y|s') p(s') = \frac{1}{|\chi|} \sum_{s' \in \chi}{\frac{1}{\pi \sigma^2}e^{-\frac{|y - hs'|^2}{\sigma^2}}}
	\label{eq:ML_SICuser}\\
	p(y|\mathcal{H}_N) &= \sum_{s' \in \chi_k} p(y|s') p(s') =  \frac{1}{|\chi_{k}|} \sum_{s' \in \chi_{k}}{ \frac{1}{\pi \sigma^2}e^{-\frac{|y - hs'|^2}{\sigma^2}} },
	\label{eq:ML_nonSICuser}
	\end{align}
	respectively, where the symbol $s'$ is assumed to be equally probable, and the cardinalities of the constellation sets $\chi$ and $\chi_k$ are denoted by $|\chi|$ and $|\chi_k|$, respectively.
	Since the SIC user should detect both superposed signals, $s_k$ and $s_n$, $p(y|\mathcal{H}_S)$ in \eqref{eq:ML_SICuser} is calculated throughout the composite constellation $\chi$.
	On the other hand, the non-SIC user detects only $s_k$; therefore, it is enough to scan the constellation set of user $k$, i.e., $\chi_k$, for obtaining \eqref{eq:ML_nonSICuser}.
	If one sample of received signals, i.e., $y$, is used for blind signal classification and $p(y|\mathcal{H}_S) > p(y|\mathcal{H}_N)$, the receiver classifies itself as the SIC user; otherwise, it classifies itself as the non-SIC user.
	
	Note that since the received signal $y$ is originally generated from $\chi$, $p(y|\mathcal{H}_S)$ increases, but $p(y|\mathcal{H}_N)$ decreases as the SNR grows. 
	Therefore, it is beneficial to schedule the user with a high SNR and the user with a low SNR as the SIC user and the non-SIC user, respectively, for reliable blind signal classification for the presence of interference.
	In this case, however, there is a risk that the data rate of the non-SIC user would not be sufficiently large to satisfy the minimum data rate constraint.
	Thus, this paper considers two conflicting constraints for the non-SIC user, the minimum data rate and the maximum error probability of blind signal classification for the presence of interference.
	
	\textit{Remark}: In practice, the classification steps for whether the received signal is based on OMA or NOMA, i.e., signal identification, and which modulation scheme is employed, i.e., blind modulation classification, should also be considered \cite{BlindSC:ArXiv2018Choi}. 
	The performance of the ML-based classifiers for those steps increases with the SNR on every NOMA user side \cite{MC:TC2000Wei}; therefore, the BS needs to schedule just the users with sufficiently large SNRs to guarantee the reliability of signal identification and blind modulation classification. 
	Meanwhile, the performance of blind classification for the presence of interference at the non-SIC user decreases with the SNR.
	As in \eqref{eq:ML_nonSICuser}, the received symbols should be closer to $\chi_k$ than to $\chi$, in order for the non-SIC user to determine that it should not perform SIC. %classify itself not to perform SIC.
	However, since the received signal is superposed and generated on the basis of $\chi$ at the transmitter side, the received symbols become close to $\chi$ as the SNR increases, as shown in Fig. \ref{subfig:highSNR}. 
	At rather low SNRs, it is highly probable that the non-SIC user classifies itself correctly, because more received symbols are closer to $\chi_k$ than to $\chi$ compared to the high-SNR scenario.
	In this regard, more elaborate user scheduling and power allocation are necessary in an NOMA system where users perform blind signal classification for the presence of interference.
	Accordingly, we only consider blind signal classification for the presence of interference in this paper.
	From now on, the term ``blind signal classification" means the determination of whether or not SIC is required for decoding the received signal on the user side.
	
	\begin{figure}[h]
		\centering
		\begin{subfigure}[b]{0.45\textwidth}
			\includegraphics[width=\textwidth]{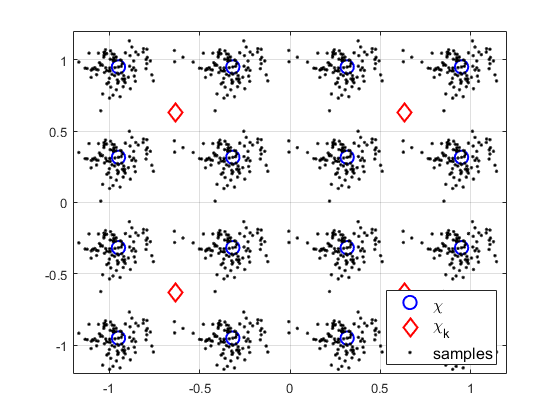}
			\caption{High SNR}
			\label{subfig:highSNR}
		\end{subfigure}
		\begin{subfigure}[b]{0.45\textwidth}
			\includegraphics[width=\textwidth]{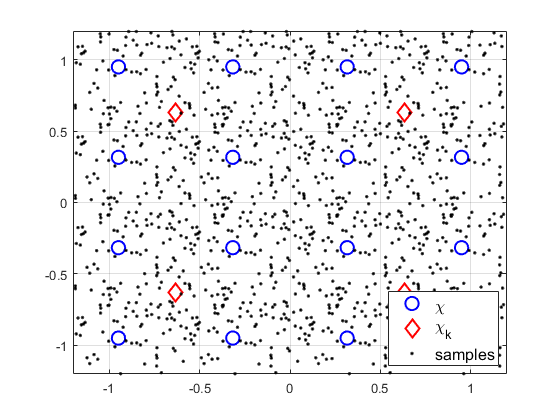}
			\caption{Low SNR}
			\label{subfig:lowSNR}
		\end{subfigure}
		\caption{Received NOMA symbols and constellation points}\label{fig:const}
	\end{figure}
	
	\subsection{Extension to $M$-User Grouping for NOMA}
	
	Although this paper mainly focuses on a two-user NOMA system, blind signal classification can be performed in the general $M$-user NOMA model.
	Again, $|h_1|^2 \geq |h_2|^2 \geq \cdots \geq |h_M|^2$; then, the largest power is allocated to user $M$, and user $M$ %ignores all interference components when decoding its data. 
	decodes $s_M$ without cancellation of any interfering signal.
	By contrast, user 1 should cancel all other users' signals.
	In general, user $m$ decodes its data after canceling the interference components of user $M$, user $M-1$, $\cdots$, user $m+1$ in order.
	Denote $\chi_{n-m}$ as the composite constellation set of $\chi_n$, $\chi_{n+1}$, $\cdots$, and $\chi_m$, i.e., $\chi_{n-m} = \chi_n \oplus \chi_{n+1} \oplus \cdots \oplus \chi_m$. 
	Let $\mathcal{H}_m$ be the hypothesis that indicates that the target user is user $m$; i.e., $M-m-1$ SIC steps are required.
	The likelihood probability of user $m$ is given by 
	\begin{equation}
	p(y|\mathcal{H}_m) = \frac{1}{|\chi_{(m+1)-M}|} \sum_{s \in \chi_{(m+1)-M}} \frac{1}{\pi \sigma^2} e^{-\frac{|y-hs|^2}{\sigma^2}}.
	\end{equation}
	The receiver then determines itself to be user $m_0$, where
	\begin{equation}
	m_0 = \underset{m\in \{1,\cdots,M \} }{\arg \max}~p(y|\mathcal{H}_m).
	\end{equation}
	However, as the number of hypotheses increases, i.e., the number of NOMA users increases, the classification performance is expected to be significantly degraded.
	In addition, a massive number of computations is required to obtain the likelihood probabilities of all users; therefore, blind signal classification is not preferred when a large number of users is served by NOMA. 
	Thus, it is reasonable to focus on a two-user NOMA system where blind signal classification for the presence of interference is performed on the user side.
	
	\begin{table}[t!]
		\caption{Example of MCS table for NOMA}
		\label{table:MCS}
		%{\footnotesize
		\begin{center}
			\begin{tabular}{|c|c|c|c|c|c|c|}
				\hline
				{} & \multicolumn{3}{|c|}{SIC user} & \multicolumn{3}{|c|}{non-SIC user} \\
				\cline{2-7}
				{Index} & order & power & rate & order & power & rate \\
				\hline
				1 & $m_{1}^s$ & $p_{1}^s$ & $r_{1}^s$ & $m_{1}^n$ & $p_{1}^n$  & $r_{1}^n$   \\
				\vdots & \vdots & \vdots & \vdots & \vdots & \vdots & \vdots      \\
				$M$ & $m_{M}^s$ & $p_{M}^s$ & $r_{M}^s$ & $m_{M}^n$ & $p_{M}^n$ & $r_{M}^n$     \\
				\hline
			\end{tabular}
		\end{center}
		%}
	\end{table}
	
	%%%%%%%%%%%%%%%%%%%%%%%%%%%%%%%%%%%%%%%%%%%%%%%%%%%%%%%%%%%%%%%%%%%%%%%%%%%%%%%%%%%%%%%%%%%%%%%%%%%%
	\section{Error Types of Blind Signal Classification for the Presence of Interference}
	\label{sec:ergodic_capacity_analysis}
	
	\subsection{Classification Error at the SIC User}
	\label{subsec:SIC->non-SIC}
	
	In this section, we focus on the probability that user $n$ (i.e., the SIC user) incorrectly determines that it should not perform SIC, denoted by $P_n\{ \hat{\mathcal{H}}_N | \mathcal{H}_S \}$, where $\hat{\mathcal{H}}_N$ indicates that the receiver classifies itself as the non-SIC user. 
	In practice, the transmitter and the receiver share the table of usable modulation and coding schemes (MCS). 
	As shown in Table \ref{table:MCS}, the MCS table of a NOMA system should include combinations of the MCSs of the SIC and non-SIC users and their power weightings. 
	Therefore, when the SIC user determines that it should not perform SIC, the signal detection would be totally incorrect even if the correct channel quality indicator (CQI) index is given.
	
	Suppose that $s_k(i_0) \in \chi_k$ and $s_n(l_0) \in \chi_n$ are the transmitted signal components for users $k$ and $n$, respectively, for $i_0 \in \mathcal{N}_k=\{1,\cdots,|\chi_k|\}$ and $k_0 \in \mathcal{N}_n = \{1,\cdots,|\chi_n|\}$. 
	The superpositioned signal becomes $s(m_0) = s_k(i_0) + s_n(k_0) \in \chi$ for $m_0 \in \mathcal{N}=\{1,\cdots,|\chi|\}$, and the received signal is $y = h (s_k(i_0)+s_n(l_0)) + w$.
	For simplicity, we define two hypotheses as follows:
	\begin{align}
	\mathcal{G}_k(i) &= \underset{s_k \in \chi_k}{\arg \max}~p(y|s_k) = s_k(i) \nonumber \\
	\mathcal{G}(i,l) &= \underset{s \in \chi}{\arg \max}~p(y|s) = s_k(i)+s_n(l) \nonumber
	\end{align}
	The classification error probability at user $n$ is then given by
	\begin{align}
	P_n\{ \hat{\mathcal{H}}_N | \mathcal{H}_S \} &= \sum_{i_0 \in \mathcal{N}_k} \sum_{l_0 \in \mathcal{N}_n} \mathrm{P} \{ p(y|\mathcal{H}_N) > p(y|\mathcal{H}_S) \} \cdot \mathrm{P} \{ s_k(i_0)+ s_n(l_0) \} 
	\label{eq:P_1-temp1} \\
	&\simeq \sum_{i_0 \in \mathcal{N}_k} \sum_{l_0 \in \mathcal{N}_n} \sum_{i_1 \in \mathcal{N}_k} \sum_{i_2 \in \mathcal{N}_k} \sum_{l \in \mathcal{N}_n} \mathrm{P} \big\{ p(y|s_k(i_1)) > p(y|s_k(i_2), s_n(l)) ~| \mathcal{G}_{k}(i_1), \mathcal{G}(i_2,l) \big\} \cdot \nonumber \\
	&~~~~~~~\frac{1}{|\chi|} \cdot \mathrm{P} \{ \mathcal{G}_{k}(i_1) \} \cdot \mathrm{P}\{ \mathcal{G}(i_2,l) \},
	\label{eq:P_1-temp2}
	\end{align}
	where all constellation points are equally probable; therefore, $\mathrm{P} \{ s_k(i_0)+ s_n(l_0) \} = 1/|\chi|$ in (\ref{eq:P_1-temp2}).
	If $s'_0=s_k(i_2)+s_n(l) \in \chi$ dominates the summation over $s' \in \chi$ in (\ref{eq:ML_SICuser}), i.e., $\underset{s \in \chi}{\arg \max}~p(y|s) = s_k(i_2)+s_n(l)$, then $p(y|\mathcal{H}_0) \approx p(y|s_k(i_2),s_n(l))$.
	In a similar way, $p(y|\mathcal{H}_1) \approx p(y|s_k(i_1))$ when $\underset{s_k}{\arg \max}~p(y|s_k) = s_k(i_0)$; then, \eqref{eq:P_1-temp1} can finally be approximated by \eqref{eq:P_1-temp2}.
	Here, $p(y|s)$ represents the probability density function of the received signal $y$ when the transmitted symbol is $s$.
	
	%Since the SIC user with a high SNR is chosen, 
	Since the user with a high SNR is chosen as the SIC user, we could further assume that $\underset{s_k}{\arg \max}~p(y|s_k) = s_k(i_0)$ and $\underset{s}{\arg \max}~p(y|s) = s_k(i_0)+s_n(l_0)$, which means that the transmitted symbol gives the largest likelihood probability.
	For simplicity, let the following functions represent normal distributions given only $s_k(i)$ and $s_k(i) + s_n(l)$, respectively;
	\begin{align}
	f_k(i) &= \frac{1}{\pi \sigma^2 |\chi_k|} \exp\Big( -\frac{|y-hs_k(i)|^2}{\sigma^2} \Big) \nonumber \\
	f(i,l) &= \frac{1}{\pi \sigma^2 |\chi|} \exp\Big( -\frac{|y-h(s_k(i)+s_n(l))|^2}{\sigma^2} \Big) \nonumber
	\end{align}
	Then,
	\begin{equation}
	P_n\{ \hat{\mathcal{H}}_N | \mathcal{H}_S \} \approx \sum_{i_0 \in \mathcal{N}_k} \sum_{l_0 \in \mathcal{N}_n} \mathrm{P} \{ f_k(i_0) > f(i_0,l_0) \} \cdot \frac{1}{|\chi|} 
	\end{equation}
	
	\begin{figure*}
		\begin{align}
		&P_n\{ \hat{\mathcal{H}}_N | \mathcal{H}_S \} \nonumber \\
		&~= \sum_{i_0 \in \mathcal{N}_k} \sum_{l_0 \in \mathcal{N}_n} Q \Big( \frac{\sigma^2 \ln \frac{|\chi_k|}{|\chi|} - |h|^2 (|s(m_0)|^2 - |s_k(i_0)|^2) + 2\textrm{Re}\{|h|^2 s(m_0)s_n^H(l_0)\} }{\sqrt{2\sigma^2 |hs_n(l_0)|^2}} \Big) \cdot \frac{1}{|\chi|}. \label{eq:P_1-final} \\
		&~\approx \sum_{i_0 \in \chi_{k,1}} \sum_{l_0 \in \chi_{n,1}} Q \Big( \frac{\sigma^2 \ln \frac{|\chi_k|}{|\chi|} - |h|^2 (|s(m_0)|^2 - |s_k(i_0)|^2) + 2\textrm{Re}\{|h|^2 s(m_0)s_n^H(l_0)\} }{\sqrt{2\sigma^2 |hs_n(l_0)|^2}} \Big) \cdot \frac{4}{|\chi|}. \label{eq:P_1-1stapprox}
		\end{align}
		\hrulefill
	\end{figure*}
	
	In addition, we assume that $M$-QAM is used for both users.
	Multiuser superposition transmission (MUST) adopted by 3GPP also considers only $M$-QAM for modulation candidates \cite{MuST:3GPP}. 
	Owing to the symmetric structure of $M$-QAM, we can reduce the computations required to obtain (\ref{eq:P_1-final}) by taking summations only in the first quadrature of $\chi_k$ and $\chi_n$, denoted by $\chi_{k,1}$ and $\chi_{n,1}$.
	Let the index sets of $\chi_{k,1}$ and $\chi_{n,1}$ be $\mathcal{N}_{k,1}$ and $\mathcal{N}_{n,1}$; then, $p_n(\hat{\mathcal{H}}_N | \mathcal{H}_S)$ can be further approximated in \eqref{eq:P_1-1stapprox}. 
	
	\subsection{Classification Error at the Non-SIC User}
	
	The classification error probability at user $k$ (i.e., the non-SIC user) is denoted by $P_k \{ \hat{\mathcal{H}}_S|\mathcal{H}_N \}$.
	When the non-SIC user classifies itself as the SIC user, it appears good because the interfering signal from the SIC user may be canceled, but it is not so. 
	Again, from Table \ref{table:MCS}, it is evident that the modulation order and/or power weightings of the non-SIC and SIC users could be different, and so the decision to perform SIC at the non-SIC user makes signal detection incorrect.
	
	$P_k \{ \hat{\mathcal{H}}_S|\mathcal{H}_N \}$ is obtained in a similar manner as that in Section \ref{subsec:SIC->non-SIC}:
	\begin{align}
	&P_k \{ \hat{\mathcal{H}}_S|\mathcal{H}_N \} = \sum_{i_0 \in \mathcal{N}_k} \sum_{l_0 \in \mathcal{N}_n} \mathrm{P} \{ p(y|\mathcal{H}_N) < p(y|\mathcal{H}_S) \} \cdot \mathrm{P} \{ s_k(i_0), s_n(l_0) \} 
	\label{eq:P_2-temp1} \\
	&~\simeq \sum_{i_0 \in \mathcal{N}_k} \sum_{l_0 \in \mathcal{N}_n} \sum_{i_1 \in \mathcal{N}_k} \sum_{i_2 \in \mathcal{N}_k} \sum_{l \in \mathcal{N}_n} \mathrm{P} \big\{ p(y|s_k(i_1)) < p(y|s_k(i_2), s_n(l)) ~| \mathcal{G}_k(i_1), \mathcal{G}(i_2,l) \big\} \cdot \nonumber \\
	&~~~~~~~\mathrm{P} \{ \mathcal{G}_k(i_1) \} \cdot \mathrm{P} \{ \mathcal{G}(i_2,l) \} \cdot \frac{1}{|\chi|}
	\label{eq:P_2-temp2} \\
	&\simeq \sum_{i_0 \in \mathcal{N}_k} \sum_{l_0 \in \mathcal{N}_n} \sum_{i_1 \in \mathcal{N}_k} \sum_{l\in \mathcal{N}_n} \mathrm{P} \big\{ p(y|s_k(i_1)) < p(y|s_k(i_1), s_n(l)) ~| \mathcal{G}_k(i_1), \mathcal{G}(i_1,l) \big\} \cdot \nonumber \\
	&~~~~~~~\mathrm{P}\{ \mathcal{G}_k(i_1) \} \cdot \mathrm{P} \{ \mathcal{G}(i_1,l) \} \cdot \frac{1}{|\chi|}.
	\label{eq:P_2-temp3}
	\end{align}
	
	Equation (\ref{eq:P_2-temp2}) is an approximation of \eqref{eq:P_2-temp1} under the assumptions that $s_k(i_2)+s_n(l)\in \chi$ dominates the summation over all $s'\in \chi$ in \eqref{eq:ML_SICuser}, and $s_k(i_1)$ dominates the summation over all $s' \in \chi_k$ in \eqref{eq:ML_nonSICuser}.
	%the maximum probability terms in (\ref{eq:ML_SICuser}) and (\ref{eq:ML_nonSICuser}) dominate total summations.
	Equation (\ref{eq:P_2-temp2}) is further approximated to (\ref{eq:P_2-temp3}) by $i_1=i_2$, which means that the decoding result of $s_k$ based on $\chi$ is the same as that based on $\chi_k$.
	Since the non-SIC user usually experiences weak channel conditions, the high-SNR approximation is not available.
	The first term of (\ref{eq:P_2-temp2}) can be computed as shown in \eqref{eq:P_2-1st-final}.
	
	\begin{figure*}
		\begin{align}
		&\mathrm{P} \{ f_k(i_1) < f(i_2,l) \} \nonumber \\ 
		&~= \mathrm{P} \big\{ 2\textrm{Re}\{yh^H (s_k(i_2)-s_k(i_1)+s_n(l))^H \} > \sigma^2 \ln \frac{|\chi|}{|\chi_k|} + |h|^2(|(s_k(i_2)+s_n(l))|^2 - |s_k(i_1)|^2) \big\} \nonumber \\
		&~= Q \bigg( \frac{1}{\sqrt{2\sigma^2 |h(s_k(i_2)-s_k(i_1)+s_n(l))|^2}} \bigg\{\sigma^2 \ln \frac{|\chi|}{|\chi_k|} + |h|^2(|(s_k(i_2)+s_n(l))|^2 - |s_k(i_1)|^2) \nonumber \\ 
		&~~~~~~~~~~~- 2\textrm{Re}\{|h|^2 s(m_0) (s_k(i_2)-s_k(i_1)+s_n(l))^H \bigg\} \bigg). \label{eq:P_2-1st-final}
		\end{align}
		\hrulefill
	\end{figure*}
	
	The second and third terms of (\ref{eq:P_2-temp2}) depend on the decision regions of modulation.
	In the case of $M$-QAM, all decision regions have square shapes.
	In other words, the decision region of each constellation point can be defined by two intervals of real and imaginary components.
	Denote $\mathcal{B}(s)$ as the decision region of the constellation point $s$, defined as $\mathcal{B}(s) = [b_r^-(s),b_r^+(s)] \cap [j b_i^-(s),j b_i^+(s)] \}$.
	$b_r^-(s)$ and $b_r^+(s)$ are lower and upper decision boundaries for the real component of $s$, and $b_i^-(s)$ and $b_i^+(s)$ are the lower and upper decision boundaries for the imaginary component of $s$. 
	In other words, $\mathrm{Re}\{\mathcal{B}(s)\} \in [b_r^-(s),b_r^+(s)]$ and $\mathrm{Im}\{\mathcal{B}(s)\} \in [b_i^-(s),b_i^+(s)]$.
	A matched filter is considered to remove the channel effect.
	Let $\tilde{y} = h^H y/|h|^2 = s + h^H w/|h|^2$; then the noise variance also changes to $\tilde{\sigma}^2  = \sigma^2/|h|^2$.
	Let $\tilde{y}_r$ and $\tilde{y}_i$ be the real and imaginary components of $\tilde{y}$; then, the second term of (\ref{eq:P_2-temp2}) is expressed by
	
	\begin{align}
	&\mathrm{P}\{ \mathcal{G}_k(i_1) \} = %\mathrm{P} \{ \tilde{y} \in \mathcal{B}(s_k(i_1)) \}  \\
	\mathrm{P} \{ \tilde{y}_r \in [b_r^-(s_k(i_1)),b_r^+(s_f(i_1))] \cap \tilde{y}_i \in [b_i^-(s_k(i_1)),b_i^+(s_k(i_1))] \} \\
	&~= \int_{a_r^-(i_1)}^{a_r^+(i_1)} \frac{ e^{-{x^2}/{\tilde{\sigma}^2} }}{\sqrt{2\pi (\tilde{\sigma}^2/2)}} \mathrm{d}x \cdot \int_{a_i^-(i_1)}^{a_i^+(i_1)} \frac{ e^{-{x^2}/{\tilde{\sigma}^2}} }{\sqrt{2\pi (\tilde{\sigma}^2/2)}} \mathrm{d}x \\
	&~= \Bigg[ Q\bigg( \frac{a_r^-(i_1)}{\sqrt{\tilde{\sigma}^2/2}} \bigg) - Q\bigg( \frac{a_r^+(i_1)}{\sqrt{\tilde{\sigma}^2/2}} \bigg) \Bigg] \times \Bigg[ Q\bigg( \frac{a_i^-(i_1)}{\sqrt{\tilde{\sigma}^2/2}} \bigg) - Q\bigg( \frac{a_i^+(i_1)}{\sqrt{\tilde{\sigma}^2/2}} \bigg) \Bigg] \label{eq:P_2-2nd}
	\end{align}
	where $a_r^+(i_1) = |b_r^+(s_k(i_1)) - \textrm{Re}\{s(m_0)\}|$, $a_r^{-}(i_1) = |b_r^-(s_k(i_1)) - \textrm{Re}\{s(m_0)\}|$, $a_i^+(i_1) = |b_i^+(s_k(i_1)) - \textrm{Im}\{s(m_0)\}|$, and $a_i^-(i_1) = |b_i^-(s_k(i_1)) - \textrm{Im}\{s(m_0)\}|$.
	
	Likewise, the third term of (\ref{eq:P_2-temp2}) becomes 
	\begin{align}
	&\mathrm{P}\{ \mathcal{G}(m_2) \} = %\mathrm{P} \{ \tilde{y} \in \mathcal{B}( s(m_2)) \}  \\
	\mathrm{P} \{ \tilde{y}_r \in [b_r^-(s(m_2)),b_r^+(s(m_2))] \cap \tilde{y}_i \in [b_i^-(s(m_2)),b_i^+(s(m_2))] \} \\
	&= \int_{c_r^-(m_2)}^{c_r^+(m_2)} \frac{ e^{ -{x^2}/{\tilde{\sigma}^2} }}{\sqrt{2\pi (\tilde{\sigma}^2/2)}} \mathrm{d}x \cdot \int_{c_i^-(m_2)}^{c_i^+(m_2)} \frac{ e^{ -{x^2}/{\tilde{\sigma}^2} }}{\sqrt{2\pi (\tilde{\sigma}^2/2)}} \mathrm{d}x \\
	&= \Bigg[ Q\bigg( \frac{c_r^-(m_2)}{\sqrt{\tilde{\sigma}^2/2}} \bigg) - Q\bigg( \frac{c_r^+(m_2)}{\sqrt{\tilde{\sigma}^2/2}} \bigg) \Bigg] \times \Bigg[ Q\bigg( \frac{c_i^-(m_2)}{\sqrt{\tilde{\sigma}^2/2}} \bigg) - Q\bigg( \frac{c_i^+(m_2)}{\sqrt{\tilde{\sigma}^2/2}} \bigg) \Bigg]
	\label{eq:P_2-3rd}
	\end{align}
	where $c_r^+(m_2) = |b_r^+(s(m_2)) - \textrm{Re}\{s(m_0)\}|$, $c_r^-(m_2) = |b_r^-(s(m_2)) - \textrm{Re}\{s(m_0)\}|$, $c_i^+(m_2) = |b_i^+(s(m_2)) - \textrm{Im}\{s(m_0)\}|$, and $c_i^-(m_2) = |b_i^-(s(m_2)) - \textrm{Im}\{s(m_0)\}|$.
	
	Incorporating (\ref{eq:P_2-1st-final}), (\ref{eq:P_2-2nd}), and (\ref{eq:P_2-3rd}) into (\ref{eq:P_2-temp2}), the BS could expect $P_k\{ \hat{\mathcal{H}}_S | \mathcal{H}_N \}$ on the basis of the instantaneous CSI.
	However, an extremely large computational load is required to derive (\ref{eq:P_2-temp2}).
	Since the power allocated to the non-SIC user is relatively large, if the SNR is not very low, it would be highly likely that $i_0=i_1=i_2$; i.e., symbol detection of the non-SIC user is correct, and (\ref{eq:P_2-temp2}) could be further approximated by 
	\begin{align}
	&P_k\{ \hat{\mathcal{H}}_S | \mathcal{H}_N \} \nonumber \\ 
	&\simeq \sum_{i_0\in \mathcal{N}_k} \sum_{l_0\in \mathcal{N}_n} \sum_{l\in \mathcal{N}_n} \mathrm{P} \Big\{ p(y|s_k(i_0)) < p(y|s_k(i_0), s_n(l)) |~ \mathcal{G}_n(i_0,l) \Big\} \cdot \mathrm{P}\{ \mathcal{G}_n(i_0,l) \} \cdot \frac{1}{|\chi|} 
	\label{eq:P_2-approx_bf} \\
	&\approx \sum_{i_0\in \mathcal{N}_{k,1}} \sum_{l_0\in \mathcal{N}_{n,1}} \sum_{l\in \mathcal{N}_n} \mathrm{P} \Big\{ p(y|s_k(i_0)) < p(y|s_k(i_0), s_n(l)) |~ \mathcal{G}_n(i_0,l) \Big\} \cdot \mathrm{P}\{ \mathcal{G}_n(i_0,l) \} \cdot \frac{4}{|\chi|},
	\label{eq:P_2-approx}
	\end{align}
	where $\mathcal{G}_n(i_0,l) = \underset{s_n \in \chi_n}{\arg \max}~p(y|s_k(i_0),s_n) = s_n(l)$.
	
	As in Section \ref{subsec:SIC->non-SIC}, $P_k\{ \hat{\mathcal{H}}_S | \mathcal{H}_N \}$ can be approximated by taking only the first quadratures of $\chi_f$ and $\chi_n$ owing to the symmetric structure of $M$-QAM, as given by (\ref{eq:P_2-approx}). 
	According to (\ref{eq:P_2-1st-final}) and (\ref{eq:P_2-2nd}), the first and second terms in the approximated version in (\ref{eq:P_2-approx}) can be respectively obtained by \eqref{eq:P2_1} and \eqref{eq:P2_2}, respectively, where $s(m_{i_0,l}) = s_k(i_0) + s_n(l)$.
	
	\begin{figure*}
		\begin{equation}
		Q\bigg( \frac{\sigma^2 \ln \frac{|\chi|}{|\chi_k|} + |h|^2(|s_k(i_0)+s_n(l)|^2 - |s_k(i_0)|^2) - 2\textrm{Re}\{ |h|^2 (s_k(i_0)+s_n(l_0))s_n^H(l) \} }{\sqrt{2\sigma^2 |hs_n(l)|^2}} \bigg)
		\label{eq:P2_1}
		\end{equation}
		\begin{align}
		&\Bigg[ Q\bigg( \frac{c_r^-(m_{i_0,l})}{\sqrt{\tilde{\sigma}^2/2}} \bigg) - Q\bigg( \frac{c_r^+(m_{i_0,l})}{\sqrt{\tilde{\sigma}^2/2}} \bigg) \Bigg]  \times \Bigg[ Q\bigg( \frac{c_i^-(m_{i_0,l})}{\sqrt{\tilde{\sigma}^2/2}} \bigg) - Q\bigg( \frac{c_i^-(m_{i_0,l})}{\sqrt{\tilde{\sigma}^2/2}} \bigg) \Bigg].
		\label{eq:P2_2}
		\end{align}
		\hrulefill
	\end{figure*}
	
	\subsection{Number of Data Samples Required for Blind Signal Classification}
	
	In practical environments, it is also important to determine how many data samples are required for the reliable performance of classification for the presence of interference.
	Only one data symbol is used for blind signal classification in prior sections, but more samples can provide better performance. 
	However, the use of many data samples may result in heavy signal processing tasks and a very large computational complexity.
	Assume that $L$ samples are used for blind signal classification and that all samples experience the same channel gain $h$.
	The likelihood probabilities of the SIC user and non-SIC user are then given by
	
	\begin{align}
	p(\mathbf{y}|\mathcal{H}_0) &= \frac{1}{|\chi|^L} \sum_{\mathbf{s} \in \chi^L}{\frac{1}{\pi \sigma^2}e^{-\frac{|\mathbf{y} - h\mathbf{s}|^2}{\sigma^2}}}
	\label{eq:ML_SICuser_vec}\\
	p(\mathbf{y}|\mathcal{H}_1) &= \frac{1}{|\chi_{k}|^L} \sum_{\mathbf{s} \in \chi_{k}^L}{ \frac{1}{\pi \sigma^2}e^{-\frac{|\mathbf{y} - h\mathbf{s}|^2}{\sigma^2}} },
	\label{eq:ML_nonSICuser_vec}
	\end{align}
	where $\mathbf{y}=[y_1,\cdots,y_L]$ and $\mathbf{s}=[s_1, \cdots, s_L]$ are $1 \times L$ vectors of the received signals and transmitted symbols, respectively.
	However, the numbers of summations in (\ref{eq:ML_SICuser_vec}) and (\ref{eq:ML_nonSICuser_vec}) exponentially increase with $L$; thus, a large number of computations are required to compute (\ref{eq:ML_SICuser_vec}) and (\ref{eq:ML_nonSICuser_vec}).
	
	Therefore, we independently compute the likelihood probability of each $y_l$, denoted by $p(\hat{\mathcal{H}}_i|y_l)$ for $l=1,\cdots,L$. 
	The probability that all of $L$ data samples are likely to predict $\hat{\mathcal{H}}_i$ for $i \in \{N,S\}$ becomes $\prod_{l=1}^{L} p(\hat{\mathcal{H}}_i | y_l)$.
	Since all data samples are based on the same hypothesis, it would be reasonable that $\hat{\mathcal{H}}_i$ is true if more than half of the data samples predict $\mathcal{H}_i$.
	Let $P_n\{\hat{\mathcal{H}}_S|\mathcal{H}_N,y_l\}$ and $P_k\{\hat{\mathcal{H}}_N|\mathcal{H}_S,y_l\}$ be the classification error probabilities computed from $y_l$.
	Assuming that all elements $y_l$ of $\mathbf{y}$ are independent, $P_n\{\hat{\mathcal{H}}_S|\mathcal{H}_N, y_1\} = \cdots = P_n\{\hat{\mathcal{H}}_S|\mathcal{H}_N, y_L\} = P_{n0}$ and $P_k\{\hat{\mathcal{H}}_N|\mathcal{H}_S, y_1\} = \cdots = P_k\{\hat{\mathcal{H}}_N|\mathcal{H}_S, y_L\} = P_{k0}$, because $L$ samples experience the same channel gain.
	The classification error probabilities can then be approximately computed by
	\begin{align}
	P_n\{\hat{\mathcal{H}}_S|\mathcal{H}_N\} &= \sum_{l=1}^{(L+1)/2} {L\choose l} (1-P_{n0})^l P_{n0}^{L-l},
	\label{eq:classErr_approx_near}\\
	P_k\{\hat{\mathcal{H}}_N|\mathcal{H}_S\} &= \sum_{l=1}^{(L+1)/2} {L\choose l} (1-P_{k0})^l P_{k0}^{L-l},
	\label{eq:classErr_approx_far}
	\end{align}
	with the odd number $L$.
	Since the receiver does not have to know any information about the data samples required for blind signal classification, the information bits can be used as samples for blind signal classification.
	Thus, the system does not need additional signaling overhead for blind signal classification.
	
	%%%%%%%%%%%%%%%%%%%%%%%%%%%%%%%%%%%%%%%%%%%%%%%%%%%%%%%%%%%%%%%%%%%%%%%%%%%%%%%%%%%%%%%%%%%%%%%%%%%%%%
	\section{Joint Optimization Problem of User Scheduling and Power Allocation}
	\label{sec:power_allocation}
	
	In this section, the effects of blind signal classification on user scheduling and power allocation in NOMA systems are presented. 
	For the joint optimization problem of user scheduling and power allocation, the sum-rate gain of NOMA over OMA is considered an optimization metric. 
	The gains at users $n$ and $k$ are given by
	\begin{align}
	\Delta_n &= R_n (1-P_n\{\hat{\mathcal{H}}_S|\mathcal{H}_N\}) - \tilde{R}_n, \\
	\Delta_k &= R_k (1-P_k\{\hat{\mathcal{H}}_N|\mathcal{H}_S\}) - \tilde{R}_k.
	\end{align}
	Again, users $k$ and $n$ are scheduled for NOMA as the non-SIC user and SIC user, respectively, and $|h_k|^2<|h_n|^2$. 
	The total sum-rate gain of NOMA over OMA is $\Delta_{(k,n)} = \Delta_k + \Delta_n$. 
	However, since $P_n\{\hat{\mathcal{H}}_S|\mathcal{H}_N\}$ and $P_k\{\hat{\mathcal{H}}_N|\mathcal{H}_S\}$ are too complicated to deal with, we approximate the maximization problem by maximizing the lower bound of $\Delta_{(k,n)}$.
	Let $P_t$ be the threshold of the error probability for reliable blind signal classification such that $\max \{P_{n} \{ \hat{\mathcal{H}}_N|\mathcal{H}_S \}, P_{k} \{ \hat{\mathcal{H}}_S|\mathcal{H}_N \} \} \leq P_t$.
	The lower bound of $\Delta_{(k,n)}$ is then obtained by
	\begin{equation}
	\underline{\Delta}_{(k,n)} = (R_n+R_k)(1-P_t) - \tilde{R}_n - \tilde{R}_k.
	\label{eq:Delta}
	\end{equation}
	
	The joint optimization problem of user scheduling and power allocation which maximizes the lower bound of the sum-rate gain of NOMA over OMA is formulated as follows:
	\begin{align}
	&\{n^*, k^*, \gamma^*_{n^*,(k^*,n^*)}, \gamma^*_{k^*,(k^*,n^*)}\} = \underset{n, k, \gamma_n, \gamma_k}{\arg \max}~\underline{\Delta}_{(k,n)} \label{eq:max-Delta} \\
	&~~~~\text{s.t.} \ \min \{R_{n}, R_{k}\} \cdot (1-P_t) \geq R_t, \label{eq:const-minimum-rate} \\
	&~~~~~~~~~\max \{P_{n} \{ \hat{\mathcal{H}}_N|\mathcal{H}_S \}, P_{k} \{ \hat{\mathcal{H}}_S|\mathcal{H}_N \} \} \leq P_t, \label{eq:const-maximum-errprob} \\
	&~~~~~~~~~\gamma_{n,(k,n)} + \gamma_{k,(k,n)} = 1,	\label{eq:const-power}
	\end{align}
	where $n^*$ and $k^*$ are the indices of the optimally scheduled users for NOMA. 
	$\gamma_{n^*,(k^*,n^*)}^{*}$ and $\gamma_{k^*,(k^*,n^*)}^{*}$ are the optimal power coefficients for users $n^*$ and $k^*$, respectively, when users $n^*$ and $k^*$ are scheduled for NOMA as the SIC user and non-SIC user, respectively. 
	The tightness of $\underline{\Delta}_{(k,n)}$ with respect to $\Delta_{(k,n)}$ can be controlled by $P_t$. 
	If the system takes a small $P_t$, the problem of \eqref{eq:max-Delta}--\eqref{eq:const-power} becomes almost identical to the maximization problem in $\Delta_{(k,n)}$.
	The minimum data rate of NOMA users and the reliability of blind signal classification are guaranteed by the constraints in \eqref{eq:const-minimum-rate} and \eqref{eq:const-maximum-errprob}, respectively.
	In addition, the normalized power is given by \eqref{eq:const-power}.
	
	A closed-form solution of (\ref{eq:max-Delta})--\eqref{eq:const-power} is difficult to obtain; therefore, we decouple (\ref{eq:max-Delta}) into separate problems of user scheduling and power allocation.
	First, %find the optimal power allocation ratios for users $k$ and $n$ for fixed user scheduling.
	start with the arbitrary scheduling of users $k$ and $n$ as the non-SIC and SIC users, respectively, and find their optimal power allocation ratios. %for users $k$ and $n$. 
	Next, a different method of user scheduling is tested depending on the tightness of the two constraints in (\ref{eq:const-minimum-rate}) and (\ref{eq:const-maximum-errprob}).
	In this way, we will solve the problem in (\ref{eq:max-Delta})--\eqref{eq:const-power} through an iterative algorithm that obtains the optimal power allocation for fixed user scheduling and gradually replaces the scheduled users.
	
	\subsection{Power Allocation Problem}
	\label{subsec:PowerAllocationProb}
	
	Assuming that users $k$ and $n$ are already scheduled, the problem for finding the optimal power allocation is given by
	\begin{align}
	&\{\gamma_{n,(k,n)}^{*}, \gamma_{k,(k,n)}^{*}\} = \underset{\gamma_n, \gamma_k}{\arg \max}~\underline{\Delta}_{(k,n)} \label{eq:max-Delta-power} \\
	&~~~~\text{s.t.} \ \min \{R_{n}, R_{k}\} \geq \tilde{R}_t, \label{eq:const-minimum-rate-power} \\
	&~~~~~~~~~\max \{P_{n}\{ \hat{\mathcal{H}}_N|\mathcal{H}_S \}, P_{k} \{ \hat{\mathcal{H}}_S|\mathcal{H}_N \} \} \leq P_t, \label{eq:const-maximum-errprob-power} \\
	&~~~~~~~~~\gamma_{n}+\gamma_{k} = 1	\label{eq:const-power-power},
	\end{align}
	where $\tilde{R}_t = R_t/(1-P_t)$.
	
	Note that $R_k$ and $R_n$ are decreasing and increasing functions of $\gamma_n$, respectively, which is easily proved by differentiating (\ref{eq:k-user data rate}) and (\ref{eq:n-user data rate}).
	Similarly, $\underline{\Delta}_{(k,n)}$ is an increasing function of $\gamma_n$.
	Although the transmitted NOMA signal is generated from $\chi$, the classification performance of the non-SIC user depends on $\chi_k$, according to (\ref{eq:ML_nonSICuser}).
	Therefore, $P_n \{ \hat{\mathcal{H}}_N | \mathcal{H}_S \}$ can be expected to increase as the symbol points of $\chi_k$ approach those of $\chi$.
	On the other hand, $P_k \{ \hat{\mathcal{H}}_S|\mathcal{H}_N \}$ is large when $\chi$ and $\chi_k$ are clearly distinguishable from each other.
	Since a larger $\gamma_n$ results in a greater difference between $\chi$ and $\chi_k$, $P_{n} \{ \hat{\mathcal{H}}_N|\mathcal{H}_S \}$ and $P_{k} \{\hat{\mathcal{H}}_S|\mathcal{H}_N \}$ are decreasing and increasing with $\gamma_n$, respectively.
	
	Suppose that there exists a certain $\bar{\gamma}_n \in [0,1]$ that satisfies $R_k = R_n = R_0$. 
	In other words, the graphs of $R_k$ and $R_n$ for $\gamma_n \in [0,1]$ intersect each other at $\bar{\gamma}_n$ in Figs. \ref{Fig:PowerAllocation1} and \ref{Fig:PowerAllocation2}. 
	Since $R_n$ increases and $R_k$ decreases with $\gamma_n$, $R_0 \geq R_t$ is necessary for the constraint in (\ref{eq:const-minimum-rate-power}).
	Similarly, consider $\bar{\gamma}_n \in [0,1]$ such that $P_{k}\{\mathcal{H}_S|\mathcal{H}_N\} = P_{n}\{\mathcal{H}_N|\mathcal{H}_S\} = P_0$. 
	The classification error probability $P_0$ should then be smaller than $P_t$ to satisfy the constraint in (\ref{eq:const-maximum-errprob-power}).
	Assuming that $R_0 \geq R_t$ and $P_0 \leq P_t$, the following theorem gives the optimal solution of (\ref{eq:max-Delta-power})--\eqref{eq:const-power-power}.
	
	\begin{figure}[t]
		\minipage{0.45\textwidth}
		\includegraphics[width=\linewidth]{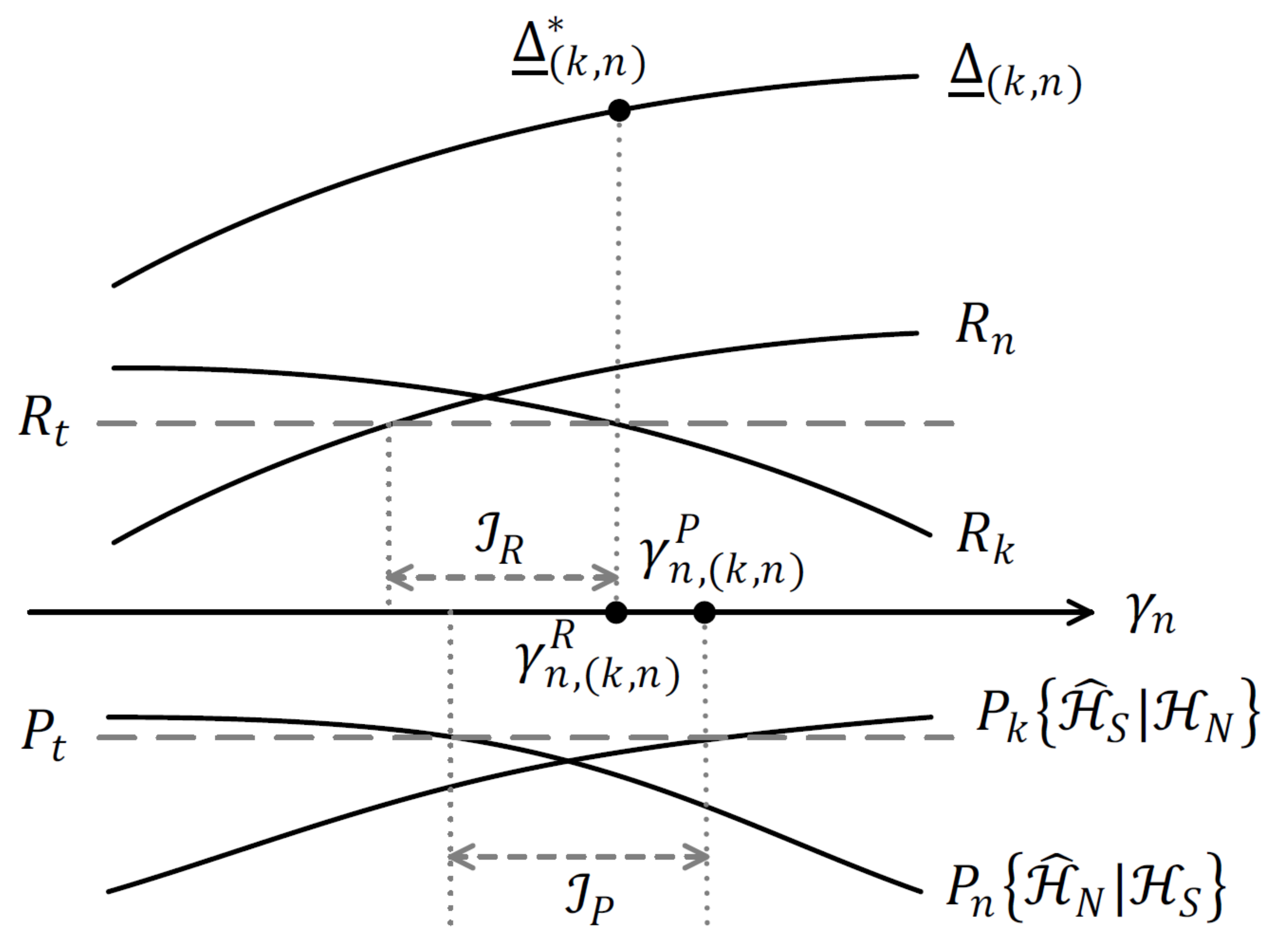}
		\caption{Power allocation example: when the minimum rate constraint is tighter than the maximum classification error constraint} 
		\label{Fig:PowerAllocation1}
		\endminipage\hfill
		\minipage{0.45\textwidth}
		\includegraphics[width=\linewidth]{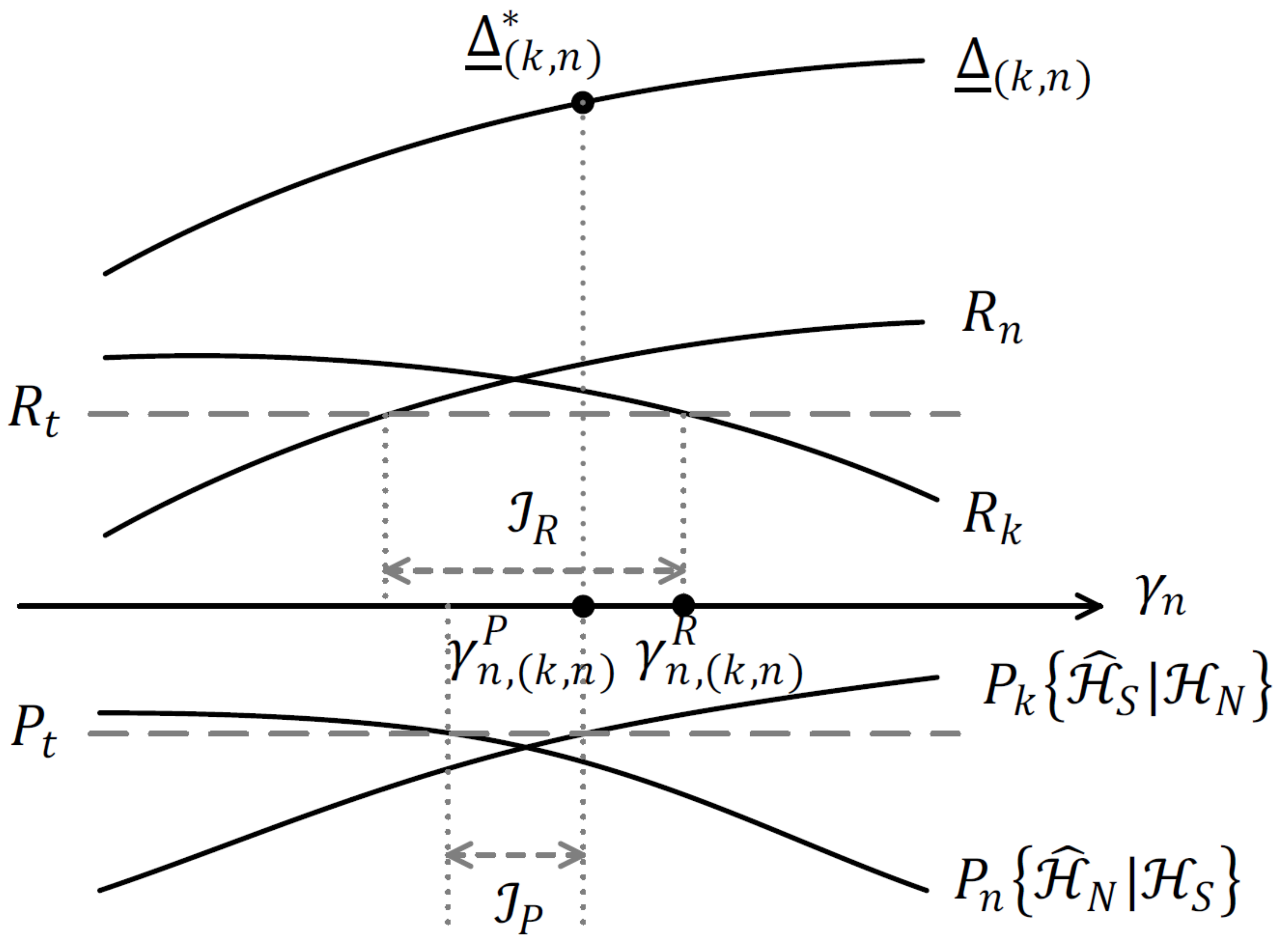}
		\caption{Power allocation example: when the maximum classification error constraint is tighter than the minimum rate constraint}
		\label{Fig:PowerAllocation2}
		\endminipage
	\end{figure}
	
	\begin{theorem}
		
		Suppose that users $k$ and $n$ are already scheduled with $|h_n|^2 > |h_k|^2$. 
		When $R_0 \geq R_t$ and $P_0 \leq P_t$ are satisfied, the optimal solution of (\ref{eq:max-Delta-power})--\eqref{eq:const-power-power} is given as
		\begin{equation}
		\gamma_{n,(k,n)}^{*} = 
		\begin{cases}
		\min\{ \gamma_{n,(k,n)}^R, \gamma_{n,(k,n)}^P \} & \text{if } \mathcal{I}_R \cap \mathcal{I}_P \neq \phi \\
		\text{No solution} & \text{otherwise}
		\end{cases},
		\label{eq:PowerAllocationSolution}
		\end{equation}
		where $\gamma_{n,(k,n)}^R$ and $\gamma_{n,(k,n)}^P$ are the power allocation ratios of user $n$ when $R_k = R_t$ and $P_k\{\mathcal{H}_0|\mathcal{H}_1\} = P_t$, respectively. 
		$\mathcal{I}_R$ and $\mathcal{I}_P$ are the intervals of $\gamma_n$ satisfying the constraints in (\ref{eq:const-minimum-rate-power}) and (\ref{eq:const-maximum-errprob-power}), respectively.
		\label{thm:optPower}
		
	\end{theorem}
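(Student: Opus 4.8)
The plan is to exploit the fact that, with $\gamma_k=1-\gamma_n$, the problem in \eqref{eq:max-Delta-power}--\eqref{eq:const-power-power} has a single free variable $\gamma_n$, and to combine this with the monotonicity properties established just before the theorem. First I would recall those facts: $R_n$ and $\underline{\Delta}_{(k,n)}$ are increasing in $\gamma_n$, $R_k$ is decreasing in $\gamma_n$, $P_n\{\hat{\mathcal{H}}_N|\mathcal{H}_S\}$ is decreasing in $\gamma_n$, and $P_k\{\hat{\mathcal{H}}_S|\mathcal{H}_N\}$ is increasing in $\gamma_n$. Since the objective $\underline{\Delta}_{(k,n)}$ is increasing in $\gamma_n$, an optimizer — whenever the feasible set is nonempty — must be the right endpoint of that feasible set, so the entire argument reduces to showing the feasible set is a closed interval and identifying its right endpoint.

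Next I would analyze the rate constraint \eqref{eq:const-minimum-rate-power}. Because $R_n$ increases and $R_k$ decreases in $\gamma_n$ and the two curves cross exactly once, at $\bar\gamma_n$ with common value $R_0$, the function $\gamma_n\mapsto\min\{R_n,R_k\}$ is single-peaked: it equals $R_n$ and rises for $\gamma_n\le\bar\gamma_n$, and equals $R_k$ and falls for $\gamma_n\ge\bar\gamma_n$. Since by hypothesis $R_0$ is at least the threshold $\tilde R_t$ in \eqref{eq:const-minimum-rate-power}, the constraint holds at $\bar\gamma_n$, so $\mathcal{I}_R=\{\gamma_n:\min\{R_n,R_k\}\ge\tilde R_t\}$ is a closed interval whose right endpoint is the unique $\gamma_n$ with $R_k=\tilde R_t$, namely $\gamma_{n,(k,n)}^R$. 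The same reasoning applied to \eqref{eq:const-maximum-errprob-power}: the two error-probability curves cross once at a common value $P_0\le P_t$, and since $P_n\{\hat{\mathcal{H}}_N|\mathcal{H}_S\}$ decreases while $P_k\{\hat{\mathcal{H}}_S|\mathcal{H}_N\}$ increases in $\gamma_n$, the map $\gamma_n\mapsto\max\{P_n\{\hat{\mathcal{H}}_N|\mathcal{H}_S\},P_k\{\hat{\mathcal{H}}_S|\mathcal{H}_N\}\}$ has a single valley, so $\mathcal{I}_P$ is a closed interval with right endpoint $\gamma_{n,(k,n)}^P$, the unique $\gamma_n$ at which $P_k\{\hat{\mathcal{H}}_S|\mathcal{H}_N\}=P_t$.

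Finally I would intersect. The feasible set of \eqref{eq:max-Delta-power}--\eqref{eq:const-power-power} is $\mathcal{I}_R\cap\mathcal{I}_P$, an intersection of two closed intervals and hence a closed interval (possibly empty). If it is empty, no power allocation satisfies both constraints — the ``No solution'' case. If $\mathcal{I}_R\cap\mathcal{I}_P\neq\phi$, then continuity of $\underline{\Delta}_{(k,n)}$ on this compact set together with its monotonicity forces the maximum to be attained at the right endpoint of $\mathcal{I}_R\cap\mathcal{I}_P$, which is $\min\{\gamma_{n,(k,n)}^R,\gamma_{n,(k,n)}^P\}$; this is exactly \eqref{eq:PowerAllocationSolution}.

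The main obstacle is the step that establishes connectedness of each constraint set, i.e. that $\min\{R_n,R_k\}$ is single-peaked and $\max\{P_n\{\hat{\mathcal{H}}_N|\mathcal{H}_S\},P_k\{\hat{\mathcal{H}}_S|\mathcal{H}_N\}\}$ is single-valleyed; this is precisely where the hypotheses $R_0\ge R_t$ and $P_0\le P_t$ enter. It hinges on the monotonicities of $R_k,R_n$ in $\gamma_n$ (obtained by differentiating \eqref{eq:k-user data rate} and \eqref{eq:n-user data rate}) and on the monotonicities of the two classification error probabilities, which rest on the geometric observation that a larger $\gamma_n$ separates $\chi_k$ from the composite constellation $\chi$. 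Once these structural facts are granted, the remainder — reduction to one variable, attainment of the supremum on a compact interval, and reading off $\min\{\gamma_{n,(k,n)}^R,\gamma_{n,(k,n)}^P\}$ — is routine.
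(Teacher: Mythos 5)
Your proposal is correct and takes essentially the same route as the paper's proof: the paper likewise reduces to the single variable $\gamma_n$, verifies that the objective is increasing by differentiating ($\frac{\partial \underline{\Delta}_{(k,n)}}{\partial \gamma_n} = \bigl( \frac{|h_n|^2}{\gamma_n |h_n|^2 + \sigma^2} - \frac{|h_k|^2}{\gamma_n |h_k|^2 + \sigma^2} \bigr)\frac{1-P_t}{\ln 2} > 0$ since $|h_n|^2>|h_k|^2$), declares no solution when $\mathcal{I}_R \cap \mathcal{I}_P = \phi$, and otherwise places the optimum at the upper boundary of the feasible region, i.e., $\min\{\gamma_{n,(k,n)}^R,\gamma_{n,(k,n)}^P\}$. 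Your additional care in establishing that each constraint set is a closed interval with the stated right endpoint (via the single-crossing structure and the hypotheses $R_0 \geq R_t$, $P_0 \leq P_t$) only makes explicit what the paper asserts through its preceding monotonicity discussion and the illustrations in Figs. 3 and 4.
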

	
	\begin{proof}
		
		$\underline{\Delta}_{(k,n)}$ can be shown to be an increasing function of $\gamma_n$, as given by 
		\begin{equation}
		\frac{\partial \underline{\Delta}_{(k,n)}}{\partial \gamma_n} = \bigg( \frac{|h_n|^2}{\gamma_n |h_n|^2 + \sigma^2} - \frac{|h_k|^2}{\gamma_n |h_k|^2 + \sigma^2} \bigg) \cdot \frac{1-P_t}{\ln 2}.
		\end{equation}
		$\frac{\partial \underline{\Delta}_{(k,n)}}{\partial \gamma_n} > 0$ because $|h_n|^2 > |h_k|^2$.
		First, it is obvious there is no solution when $\mathcal{I}_R \cap \mathcal{I}_P = \phi$ because
		any power allocation rule cannot satisfy both of the constraints in (\ref{eq:const-minimum-rate-power}) and (\ref{eq:const-maximum-errprob-power}).
		Consider the case $\mathcal{I}_R \cap \mathcal{I}_P \neq \phi$.
		Without the classification error constraint in \eqref{eq:const-maximum-errprob-power}, %the optimal power allocation $\gamma^*_{n,(k,n)}$ would be found at $R_k=R_t$. 
		because $\underline{\Delta}_{(k,n)}$ increases and $R_k$ decreases with $\gamma_n$, the upper boundary of $\mathcal{I}_R$, i.e., $\gamma^*_{n,(k,n)}$, is the optimal solution satisfying $R_k=R_t$, as shown in Figs. \ref{Fig:PowerAllocation1} and \ref{Fig:PowerAllocation2}.
		On the other hand, 
		without the minimum data rate constraint in \eqref{eq:const-minimum-rate-power}, the optimal power allocation is obtained at $P_k \{\hat{\mathcal{H}}_S|\mathcal{H}_N\} = P_t$.  
		Therefore, (\ref{eq:PowerAllocationSolution}) could be achieved to satisfy both constraints for the minimum data rate and maximum classification error probability.
		
	\end{proof}
	
	\begin{algorithm} [b!]
		\caption{Bisection method for finding $\gamma_{n,(k,n)}^P$
			\label{algo:gamma_n^P}}
		\begin{algorithmic}[1]
			\State{Initialize $\gamma_{-}=0$, $\gamma_{+}=1$.}
			\While{$\gamma_{+} - \gamma_{-} \geq \epsilon$}{
				\State{$\gamma_n^{P} = (\gamma_{+} + \gamma_{-})/2$}
				\If{$P_k \{\hat{\mathcal{H}}_S|\mathcal{H}_N\} \leq P_t$}
				$\gamma_{-} = \gamma_{n,(k,n)}^{P}$
				\Else
				~~$\gamma_{+} = \gamma_{n,(k,n)}^{P}$
				\EndIf
			}
			\EndWhile
		\end{algorithmic}
	\end{algorithm}
	
	Note that Theorem \ref{thm:optPower} can also be used to explain the fact that $\gamma_{n,(k,n)}^{*}$ is the upper boundary of $\mathcal{I}_R \cap \mathcal{I}_P$.
	For example, in Fig. \ref{Fig:PowerAllocation1}, $\gamma_{n,(k,n)}^P$ is the optimal solution for only the maximum classification error probability constraint in (\ref{eq:const-maximum-errprob}), but it does not satisfy the minimum data rate constraint in (\ref{eq:const-minimum-rate}). 
	In this case, it can be stated that the minimum data rate constraint is tighter than the maximum classification error probability constraint.
	Therefore, $\gamma^P_{n,(k,n)} > \gamma^R_{n,(k,n)}$ and $\gamma_{n,(k,n)}^{*} = \gamma_{n,(k,n)}^R$. 
	On the other hand, Fig. \ref{Fig:PowerAllocation2} shows the case in which the maximum classification error constraint is tighter than the minimum data rate constraint, where $\gamma^P_{n,(k,n)} < \gamma^R_{n,(k,n)}$.
	
	It remains to find $\gamma_{n,(k,n)}^R$ and $\gamma_{n,(k,n)}^P$.
	$\gamma_{n,(k,n)}^R$ can be directly obtained from (\ref{eq:k-user data rate}) with knowledge of the instantaneous channel gain, as given by 
	\begin{equation}
	\gamma_{n,(k,n)}^R = \frac{(|h_k|^2/\sigma^2) - (2^{R_t}-1)}{ 2^{R_t}(|h_k|^2/\sigma^2) }.
	\end{equation}
	However, a closed-form expression for $\gamma_{n,(k,n)}^P$ is difficult to derive because $P_k \{\hat{\mathcal{H}}_S|\mathcal{H}_N\}$ in (\ref{eq:P_2-approx}) is not simply expressed with $\gamma^P_{n,(k,n)}$.
	Therefore, the bisection method is presented to compute $\gamma_{n,(k,n)}^P$, and the details are presented in Algorithm \ref{algo:gamma_n^P}.
	
	\subsection{User Scheduling Problem}
	\label{subsec:user_scheduling}
	
	In Section \ref{subsec:PowerAllocationProb}, the optimal power allocation ratios are derived for fixed scheduling of user $k$ and user $n$. 
	We then aim at finding better user scheduling,
	%Then, we need to find the better user scheduling, 
	and the following lemma gives an insight into user scheduling.
	
	\begin{lemma}
		$\underline{\Delta}_{(k,n)}$ is an increasing function of the received SNR of the SIC user (user $n$) in the high SNR region. 
		Moreover, if $\frac{1-2\gamma_n-2P_t-2\gamma_n P_t}{\gamma_n}>\frac{|h_k|^2}{\sigma^2}$, $\Delta_{(k,n)}$ is an increasing function of the received SNR of the non-SIC user (user $k$); otherwise, $\Delta_{(k,n)}$ is a decreasing function of the received SNR of the non-SIC user.
		\label{lemma:Delta-SNR}
	\end{lemma}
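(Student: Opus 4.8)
The plan is to rewrite $\underline{\Delta}_{(k,n)}$ entirely in terms of the two received SNRs, $\rho_n:=|h_n|^2/\sigma^2$ and $\rho_k:=|h_k|^2/\sigma^2$, and then inspect the sign of its partial derivatives. Using $\gamma_n+\gamma_k=1$, the non-SIC rate in \eqref{eq:k-user data rate} collapses to $R_k=\log_2\frac{1+\rho_k}{1+\rho_k\gamma_n}$, while $R_n=\log_2(1+\rho_n\gamma_n)$ and $\tilde R_i=\tfrac12\log_2(1+\rho_i)$, so \eqref{eq:Delta} becomes
\[
\underline{\Delta}_{(k,n)}=(1-P_t)\big[\log_2(1+\rho_n\gamma_n)+\log_2(1+\rho_k)-\log_2(1+\rho_k\gamma_n)\big]-\tfrac12\log_2(1+\rho_n)-\tfrac12\log_2(1+\rho_k).
\]
This is the working form for both halves of the lemma; as in Theorem~\ref{thm:optPower}, I read the second claim as a statement about $\underline{\Delta}_{(k,n)}$.

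For the SIC user I would differentiate in $\rho_n$ (with $\gamma_n,\rho_k,P_t$ fixed):
\[
\frac{\partial \underline{\Delta}_{(k,n)}}{\partial \rho_n}=\frac{1}{\ln 2}\left[\frac{(1-P_t)\gamma_n}{1+\rho_n\gamma_n}-\frac{1}{2(1+\rho_n)}\right].
\]
In the high-SNR regime the two bracketed terms are asymptotic to $(1-P_t)/\rho_n$ and $1/(2\rho_n)$, so the bracket behaves like $(\tfrac12-P_t)/\rho_n>0$, since any useful classifier has $P_t<\tfrac12$; hence $\underline{\Delta}_{(k,n)}$ is eventually increasing in $\rho_n$. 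If a precise threshold is wanted, cross-multiplying shows the derivative is positive exactly for $\rho_n>\big(1-2(1-P_t)\gamma_n\big)/\big(\gamma_n(1-2P_t)\big)$.

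For the non-SIC user I would differentiate in $\rho_k$. The one step needing care is to combine the two $\rho_k$-dependent terms,
\[
\frac{1}{1+\rho_k}-\frac{\gamma_n}{1+\rho_k\gamma_n}=\frac{\gamma_k}{(1+\rho_k)(1+\rho_k\gamma_n)},
\]
after which
\[
\frac{\partial \underline{\Delta}_{(k,n)}}{\partial \rho_k}=\frac{1}{\ln 2\,(1+\rho_k)}\left[\frac{(1-P_t)\gamma_k}{1+\rho_k\gamma_n}-\frac12\right].
\]
The prefactor is positive, so the sign of the derivative equals the sign of the bracket, which is positive exactly when $\rho_k<\big(2(1-P_t)\gamma_k-1\big)/\gamma_n$; substituting $\gamma_k=1-\gamma_n$ and expanding puts this inequality in the form stated in the lemma, and when $\rho_k$ exceeds the threshold the bracket, hence the derivative, is negative, which is the ``otherwise'' case.

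Everything reduces to one-variable calculus, so there is no deep obstacle; the only points that require attention are making the asymptotic comparison in $\partial\underline{\Delta}_{(k,n)}/\partial\rho_n$ precise (and explicitly using $P_t<\tfrac12$ to pin down the sign), and carrying out the fraction-combining identity together with the substitution $\gamma_k=1-\gamma_n$ exactly so that the resulting threshold matches the one in the statement.
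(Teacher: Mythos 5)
Your proof is correct and follows essentially the same route as the paper: differentiate $\underline{\Delta}_{(k,n)}$ with respect to each user's received SNR (the paper uses $|h_i|^2$ directly rather than $\rho_i=|h_i|^2/\sigma^2$, which is equivalent up to a positive factor), read off the sign of the resulting numerator, and handle the SIC-user claim by the high-SNR limit, with your explicit use of $P_t<\tfrac12$ making precise what the paper leaves implicit. One small note: your expanded non-SIC threshold $\frac{1-2\gamma_n-2P_t+2\gamma_n P_t}{\gamma_n}$ agrees with the paper's own derivation, so it matches the proof rather than the lemma statement as printed, whose $-2\gamma_n P_t$ term appears to be a sign typo in the paper.
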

	\begin{proof}
		Since $\gamma_n$, $\gamma_k$, and $\sigma^2$ are constants, the received SNRs of both users depend only on the channel gains.
		Differentiating (\ref{eq:Delta}) with respect to $|h_n|^2$,
		\begin{align}
		\frac{\partial \underline{\Delta}_{(k,n)}}{\partial |h_n|^2} &= \frac{2 \gamma_n(|h_n|^2+\sigma^2)(1-P_t) - \gamma_n|h_n|^2 - \sigma^2}{2 \ln 2 (|h_n|^2+\sigma^2)(\gamma_n|h_n|^2+\sigma^2)} \\
		&= \begin{cases}
		>0 & \textrm{if } \frac{1-2\gamma_n(1-P_t)}{\gamma_n} < \frac{|h_n|^2}{\sigma^2}(1-2P_t) \\
		<0 & \textrm{otherwise}
		\end{cases}.
		\end{align}
		Because the nonextreme value of $\gamma_n > 0$, as $|h_n|^2 \rightarrow \infty$, $\frac{1-2\gamma_n(1-P_t)}{\gamma_n} < \frac{|h_n|^2}{\sigma^2}(1-2P_t)$ is always satisfied, and $\frac{\partial \underline{\Delta}_{(k,n)}}{\partial |h_n|^2}$ converges to zero. 
		Therefore, $\underline{\Delta}_{(k,n)}$ is a nondecreasing function of $|h_n|^2$ in the high-SNR region.
		
		Similarly, 
		\begin{align}
		\frac{\partial \underline{\Delta}_{(k,n)}}{\partial |h_k|^2} &= \frac{-\gamma_n |h_k|^2 + (1-2\gamma_n-2P_t + 2\gamma_n P_t ) \sigma^2}{2\ln 2 (|h_k|^2+\sigma^2)(\gamma_n |h_k|^2 + \sigma^2)} \\
		&=
		\begin{cases}
		>0 & \textrm{if } \frac{1-2\gamma_n-2P_t + 2\gamma_n P_t}{\gamma_n} > \frac{|h_k|^2}{\sigma^2} \\
		<0 & \textrm{otherwise}
		\end{cases}.
		\end{align}
		However, the high-SNR approximation cannot be applied to user $k$ because $|h_k|^2$ is quite small for blind signal classification.
		Recall that $P_k \{\hat{\mathcal{H}}_S|\mathcal{H}_N \}$ increases with the received SNR.
	\end{proof}
	
	%Remark that $\frac{\partial \underline{\Delta}_{(k,n)}}{\partial |h_n|^2} > 0$ when $\frac{1-2\gamma_n(1-P_t)}{\gamma_n} < \frac{|h_n|^2}{\sigma^2}(1-2P_t)$ and that  $\frac{\partial \underline{\Delta}_{(k,n)}}{\partial |h_k|^2} < 0$ when $\frac{1-2\gamma_n-2P_t - 2\gamma_n P_t}{\gamma_n} > \frac{|h_k|^2}{\sigma^2}$, as $P_t \rightarrow 0$ in Lemma \ref{lemma:Delta-SNR}.
	%In other words, if the performances of blind signal classification at the NOMA users is guaranteed, the user with the higher SNR is preferred for the SIC user, but the user with the lower SNR is better to be chosen as the non-SIC user.
	
	According to Lemma \ref{lemma:Delta-SNR}, the user with the strongest channel conditions is the best choice for the SIC user (user $n$) because $\underline{\Delta}_{(k,n)}$ increases and $P_n \{ \hat{\mathcal{H}}_N|\mathcal{H}_S \}$ decreases with the received SNR of the SIC user.
	Recall that $|h_1|^2 \geq |h_2|^2 \geq \cdots \geq |h_K|^2$; thus, the optimal choice of the SIC user becomes $n^* =1$.
	On the other hand, the scheduling of the non-SIC user (user $k$) is not clear, because $\underline{\Delta}_{(k,n_0)}$ could be increasing or decreasing depending on the received SNR of the non-SIC user.
	Instead, we can see that the global maximum of $\underline{\Delta}^*_{(k,n)}$ is at $\frac{|h_k|^2}{\sigma^2} = \frac{1 - 2\gamma_n - 2P_t + 2\gamma_n P_t}{\gamma_n}$.
	However, the number of users is finite, and it is almost impossible that a user exists whose SNR is exactly $\frac{1 - 2\gamma_n - 2P_t + 2\gamma_n P_t}{\gamma_n}$; therefore, we will gradually find the better user to be the non-SIC user instead of user $k$.
	%the user better to be the non-SIC user instead of user $k$.
	%The different choice of the non-SIC user could be divided into two ways: 
	The better non-SIC user can be chosen in two ways: the scheduling of the user with the larger or smaller received SNR than the previous scheduled user $k$.
	The next step for the scheduling of the non-SIC user follows one of the following four cases:
	
	\begin{figure}[t!]
		\minipage{0.45\textwidth}
		\includegraphics[width=\linewidth]{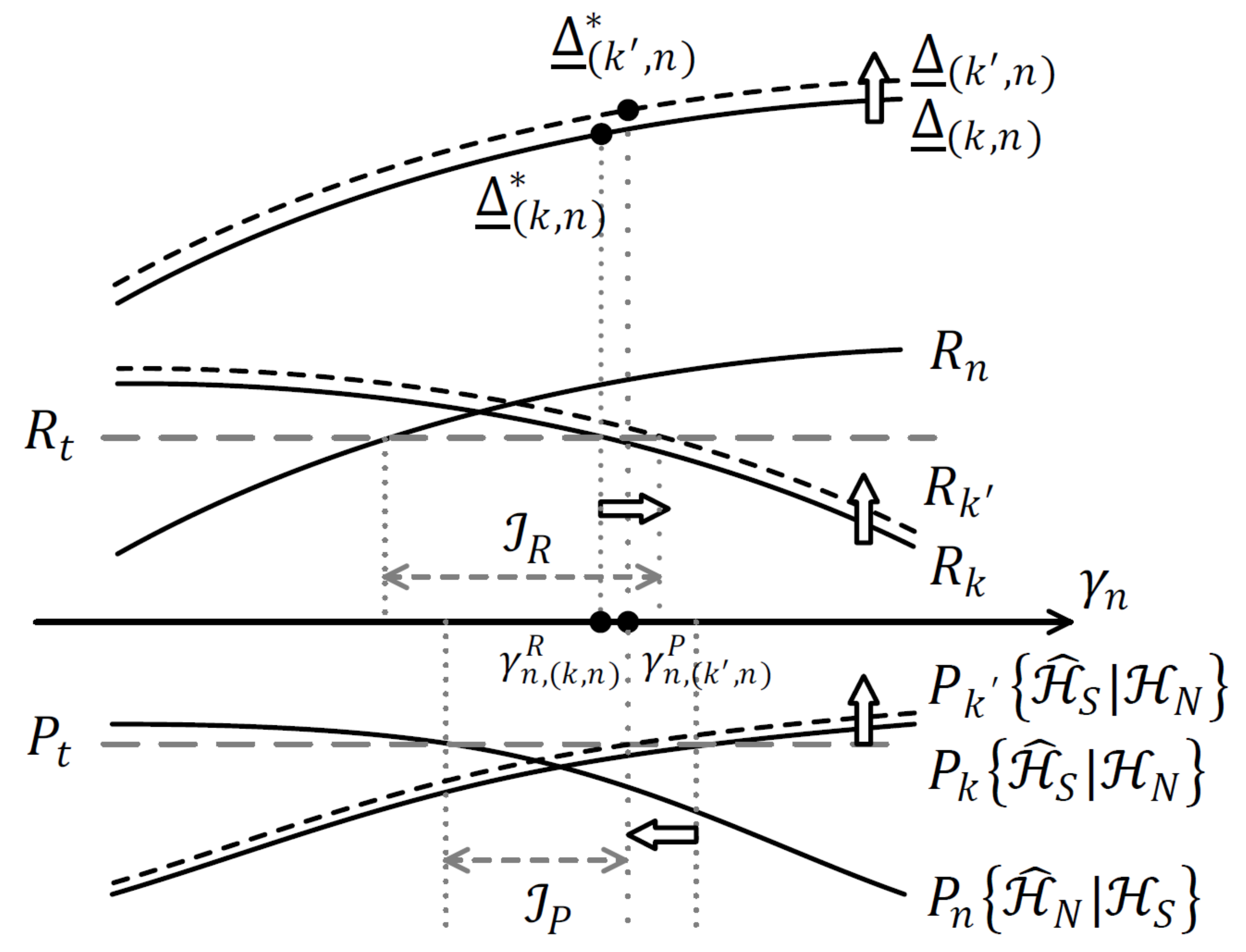}
		\caption{Example of Case 1 of the user scheduling problem} 
		\label{Fig:UserScheduling_case1}
		\endminipage\hfill
		\minipage{0.45\textwidth}
		\includegraphics[width=\linewidth]{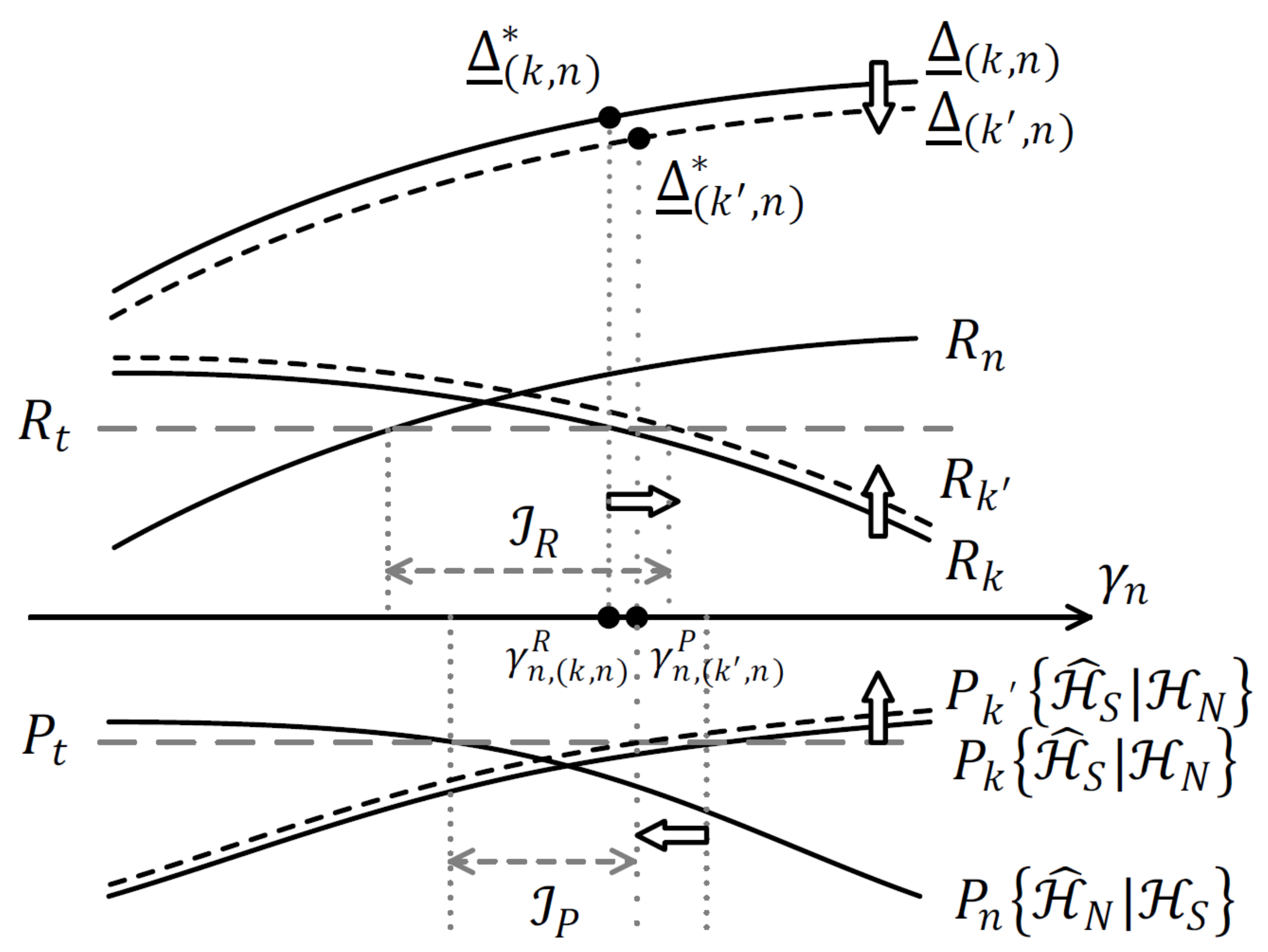}
		\caption{Example of Case 2 of the user scheduling problem}
		\label{Fig:UserScheduling_case2}
		\endminipage
	\end{figure}
	
	\subsubsection{Case 1 ($\gamma_{n,(k,n)}^{*} = \gamma_{n,(k,n)}^R$ and $\frac{1-2\gamma_n-2P_t + 2\gamma_n P_t}{\gamma_n} \geq \frac{|h_k|^2}{\sigma^2}$)}
	
	The minimum data rate constraint is tighter than the classification error constraint, and $\underline{\Delta}_{(k,n)}$ increases with $\frac{|h_k|^2}{\sigma^2}$ by Lemma \ref{lemma:Delta-SNR}. 
	To relieve the constraint of the data rate, the choice of user $k'$ whose SNR is larger than that of user $k$, i.e., $|h_{k'}|^2 > |h_k|^2$, would be beneficial.
	Then, $R_k < R_{k'}$ and $P_k \{ \hat{\mathcal{H}}_S |\mathcal{H}_N \} < P_{k'} \{ \hat{\mathcal{H}}_S|\mathcal{H}_N \}$.
	In Fig. \ref{Fig:UserScheduling_case1}, some arrows and dashed lines indicate which values are changed with the newly chosen non-SIC user, user $k'$.
	$\underline{\Delta}_{(k',n)}$, $R_{k'}$, and $P_{k'} \{\hat{\mathcal{H}}_S|\mathcal{H}_N\}$ increase, and therefore $\gamma_{n,(k'n)}^P < \gamma_{n,(k,n)}^P$ and $\gamma_{n,(k',n)}^R > \gamma_{n,(k,n)}^R$.
	$\mathcal{I_P}$ becomes narrower, but $\mathcal{I_R}$ is wider than before, and this means that the scheduling of user $k'$ rather than user $k$ as the non-SIC user relieves the minimum data rate constraint, but makes the classification error probability constraint tighter.
	In Fig. \ref{Fig:UserScheduling_case1}, when user $k$ is scheduled, $\underline{\Delta}_{(k,n)}^{*}$ is found at $\gamma_{n,(k,n)}^{*} = \gamma_{n,(k,n)}^R$. 
	However, when user $k'$ is scheduled, $\underline{\Delta}_{(k',n)}^{*}$ is obtained when $\gamma_{n,(k',n)}^{*} = \gamma_{n,(k',n)}^P$. 
	Note that $\underline{\Delta}_{(k,n)}$ increases with the received SNR of the non-SIC user and $\gamma_n$.
	Therefore, since $|h_{k'}|^2 > |h_k|^2$ and $\gamma_{n,(k',n)}^{*} > \gamma_{n,(k,n)}^{*}$, we can see that $\underline{\Delta}_{(k,n)} \leq \underline{\Delta}_{(k',n)}$ in Fig. \ref{Fig:UserScheduling_case1}.
	However, it does not guarantee that $\gamma_{n,(k',n)}^{*} > \gamma_{n,(k,n)}^{*}$ always. 
	If the difference in the SNRs of users $k$ and $k'$ is very large, it is possible that $\gamma^*_{n,(k',n)} < \gamma_{n,(k,n)}^*$; then, the maximum classification error constraint would be too tight to guarantee a larger $\underline{\Delta}^*_{(k',n)}$ than $\underline{\Delta}^*_{(k,n)}$. 
	In this case, the comparison step of $\underline{\Delta}^*_{(k',n)}$ and $\underline{\Delta}^*_{(k,n)}$ is necessary.
	If $\underline{\Delta}^*_{(k',n)} > \underline{\Delta}^*_{(k,n)}$, user $k'$ is scheduled as the non-SIC user; otherwise, user $k$ remains as the non-SIC user.
	
	\subsubsection{Case 2 ($\gamma_{n,(k,n)}^{*} = \gamma_{n,(k,n)}^R$ and $\frac{1-2\gamma_n-2P_t - 2\gamma_n P_t}{\gamma_n} < \frac{|h_k|^2}{\sigma^2}$)}
	
	The minimum data rate constraint is tighter than the constraint of classification; therefore, user $k'$ with a better SNR than user $k$, i.e., $|h_{k'}|^2 > |h_k|^2$, is chosen similar to Case 1.
	The changes in $R_{k'}$, $P_{k'} \{ \hat{\mathcal{H}}_S|\mathcal{H}_N \}$, $\gamma_{n,(k',n)}^R$, and $\gamma_{n,(k',n)}^P$ are all the same as those in Case 1.
	However, $\underline{\Delta}_{(k,n)}$ decreases with $|h_k|^2$ in this case; therefore $\underline{\Delta}_{(k,n)}^{*} < \underline{\Delta}_{(k',n)}^{*}$ cannot be guaranteed.
	As shown in Fig. \ref{Fig:UserScheduling_case2}, even though $\gamma_{n,(k',n)}^{*}=\gamma_{n,(k',n)}^P$ becomes larger than $\gamma_{n,(k,n)}^{*} = \gamma_{n,(k,n)}^R$, it is possible that $\underline{\Delta}_{(k,n)}^{*} > \underline{\Delta}_{(k',n)}^{*}$. 
	However, if the slope of the $\underline{\Delta}_{(k,n)}$ graph is very steep, $\underline{\Delta}_{(k,n)}^{*} < \underline{\Delta}_{(k',n)}^{*}$ is also possible. 
	Therefore, a comparison of the newly updated $\underline{\Delta}_{(k',n)}^{*}$ with the previously obtained $\underline{\Delta}_{(k,n)}^{*}$ is necessary.
	If $\underline{\Delta}^{*}_{(n,k)} < \underline{\Delta}^{*}_{(n,k')}$, user $k'$ is preferred as the non-SIC user rather than user $k$.
	Otherwise, user $k$ is determined as the non-SIC user.
	
	\begin{figure}[t!]
		\minipage{0.45\textwidth}
		\includegraphics[width=\linewidth]{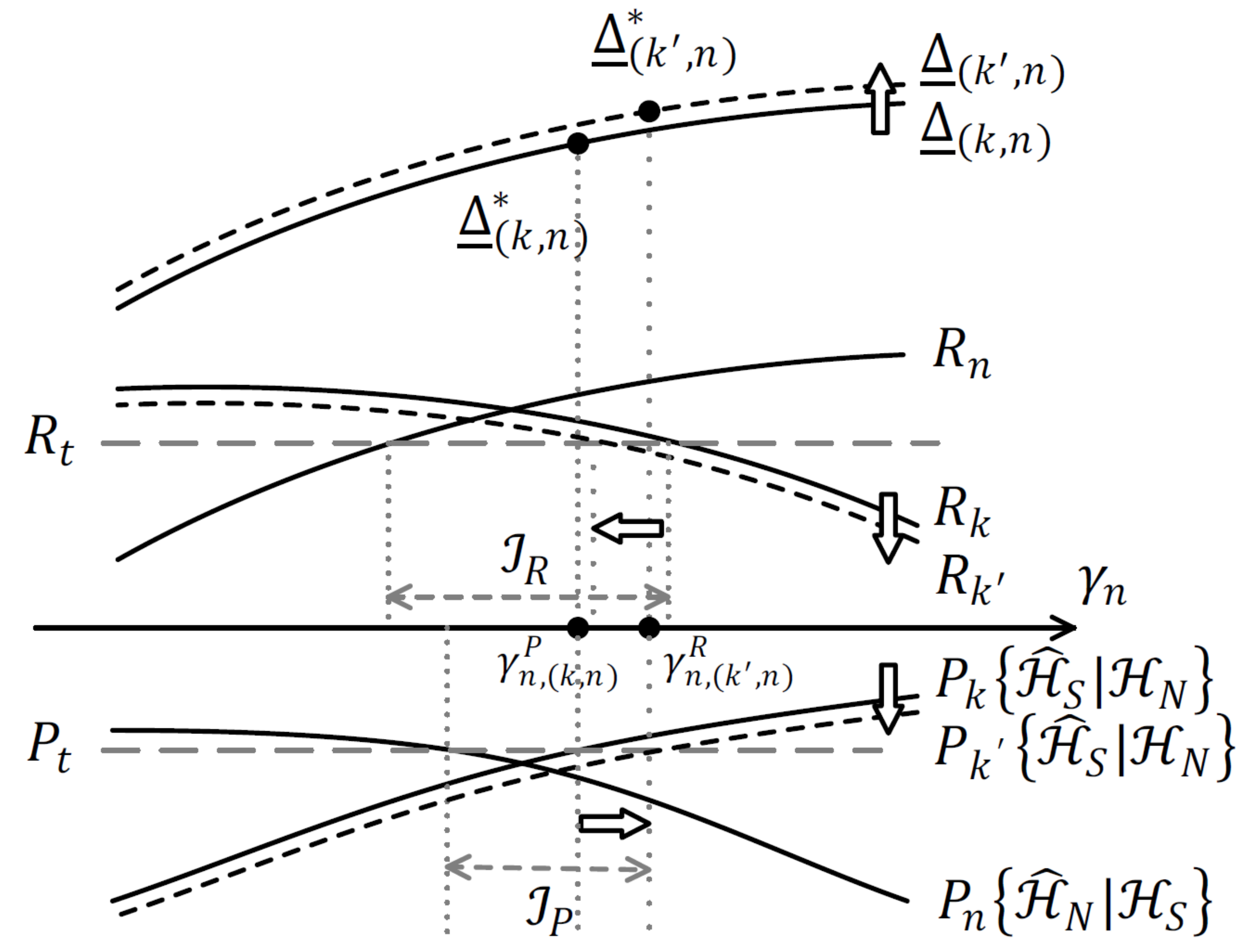}
		\caption{Example of Case 3 of the user scheduling problem} 
		\label{Fig:UserScheduling_case3}
		\endminipage\hfill
		\minipage{0.45\textwidth}
		\includegraphics[width=\linewidth]{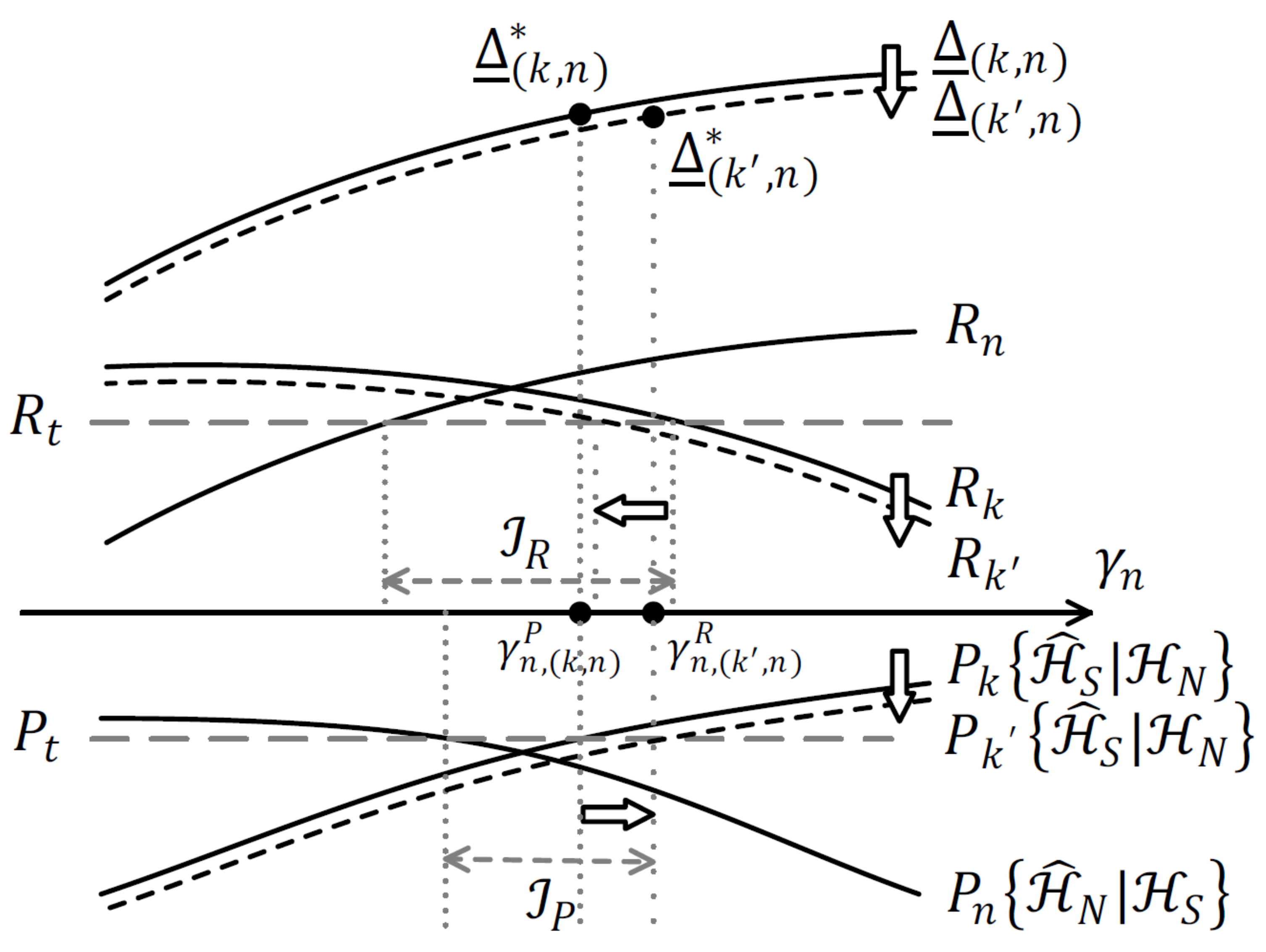}
		\caption{Example of Case 4 of the user scheduling problem}
		\label{Fig:UserScheduling_case4}
		\endminipage
	\end{figure}
	
	\subsubsection{Case 3 ($\gamma_{n,(k,n)}^{*} = \gamma_{n,(k,n)}^P$ and $\frac{1-2\gamma_n-2P_t -2\gamma_n P_t}{\gamma_n} < \frac{|h_k|^2}{\sigma^2}$)}
	
	In this case, the classification error constraint is tighter than the minimum data rate constraint.
	Therefore, the BS takes user $k'$ with a smaller SNR rather than user $k$ as the non-SIC user; i.e., $|h_{k'}|^2 < |h_k|^2$.
	$P_k \{ \hat{\mathcal{H}}_S |\mathcal{H}_N \}$ and $R_k$ then decrease, and therefore $\mathcal{I}_P$ is enlarged but $\mathcal{I}_R$ becomes narrower.
	Fig. \ref{Fig:UserScheduling_case3} shows that $\gamma_{n,(k,n)}^{*} = \gamma_{n,(k,n)}^P$ but $\gamma_{n,(k',n)}^{*} = \gamma_{n,(k',n)}^R$.
	Note that $\underline{\Delta}_{(k,n)}$ decreases with the received SNR of the non-SIC user but increases with $\gamma_n$.
	Since $|h_{k'}|^2 < |h_k|^2$ and $\gamma_{n,(k,n)}^{*} < \gamma_{n,(k',n)}^{*}$ in Fig. \ref{Fig:UserScheduling_case3}, $\underline{\Delta}_{(k,n)}^{*} < \underline{\Delta}_{(k',n)}^{*}$ obviously, and the scheduling of the non-SIC user with the smaller SNR is beneficial in terms of the sum-rate gain of NOMA over OMA.
	However, if $\gamma_{n,(k,n)}^* > \gamma_{n,(k,n)}^*$, the scheduling of user $k'$ is only conducted when $\underline{\Delta}_{(k,n)}^{*} < \underline{\Delta}_{(k',n)}^{*}$, similar to Case 1.
	
	\subsubsection{Case 4 ($\gamma_{n,(k,n)}^{*} = \gamma_{n,(k,n)}^P$ and $\frac{1-2\gamma_n-2P_t - 2\gamma_n P_t}{\gamma_n} \geq \frac{|h_k|^2}{\sigma^2}$)}
	
	The maximum classification error constraint is tighter than the minimum data rate constraint; therefore, user $k'$ with the weaker SNR is chosen rather than user $k$. 
	The changes in the parameters are the same as those in Case 3, except that $\underline{\Delta}_{(k,n)}$ decreases with the received SNR of the non-SIC user.
	Therefore, it cannot be guaranteed that user $k'$ is better than user $k$ as the non-SIC user, and it is necessary to compare $\underline{\Delta}^{*}_{(k,n)}$ with $\underline{\Delta}^{*}_{(k',n)}$ after replacing user $k$ by user $k'$.
	If  $\underline{\Delta}^{*}_{(k,n)} < \underline{\Delta}^{*}_{(k',n)}$, user $k'$ is scheduled;
	otherwise, user $k$ is more appropriate as the non-SIC user than user $k'$.
	
	By iteratively examining users in increasing or decreasing order of the SNR, depending on each case, the BS can find the most appropriate one as the non-SIC user giving the maximum value of $\underline{\Delta}_{(k_0,n_0)}^{*}$.
	In addition, the optimal power allocation rule is obtained simultaneously.
	The details are presented in Algorithm \ref{algo}.
	
	\begin{algorithm}
		\caption{Iterative algorithm for jointly optimizing user scheduling and power allocation ratios
			\label{algo}}
		\begin{algorithmic}[1]
			\Require{\\
				\begin{itemize}
					\item $|h_1|^2 \geq \cdots \geq |h_K|^2$: Channel gains of $K$ users
					\item $R_t$: minimum threshold of the data rate 
					\item $P_t$: maximum threshold of the classification error probability
					\item $\sigma^2$: normalized noise variance
				\end{itemize}
			}
			\State{Choose $n^* = 1$}
			\State{Arbitrarily choose $k^* = k$, for $k \in \{2,\cdots,K\}$}
			
			\While{True}{
				
				\State{Compute $\gamma_{n^*,(k^*,n^*)}^R$, $\gamma_{n^*,(k^*,n^*)}^P$, and $\underline{\Delta}_{(k^*,n^*)}$}
				\If{$\gamma_{n,(k^*,n^*)}^R < \gamma_{n,(k^*,n^*)}^P$}
				\State{$k' = k^* + 1$}
				\If{$k' == K+1$} 
				\State{$\underline{\Delta}_{(k^*,n^*)}^{*} = \underline{\Delta}_{(k^*,n^*)}$, $\gamma_{(k^*,n^*)}^{*} = \gamma_{(k^*,n^*)}^R$} 
				\State{break;}
				\EndIf
				\State{Compute $\gamma_{n,(k^*,n)}^*$ and $\gamma_{n,(k',n)}^*$ by Theorem \ref{thm:optPower}}
				\If{$\frac{1-2\gamma_{n,(k,n^*)}^R-2P_t-2\gamma_{n,(k,n^*)}^R P_t}{\gamma_{n,(k,n^*)}^R} \geq \frac{|h_k|^2}{\sigma^2}$  \textbf{and} $\gamma_{n,(k^*,n)}^* < \gamma_{n,(k',n)}^*$}
				\State{flag = False;}
				\Else \State{flag = True;}
				\EndIf
				
				\Else
				\State{$k' = k^* - 1$}
				\If{$k' == 1$} 
				\State{$\underline{\Delta}_{(k,n)}^{*} = \underline{\Delta}_{(k,n)}$, $\gamma_{(k,n)}^{*} = \gamma_{(k,n)}^P$} 
				\State{break;}
				\EndIf
				\If{$\frac{1-2\gamma_{n,(k,n^*)}^R-2P_t-2\gamma_{n,(k,n^*)}^R P_t}{\gamma_{n,(k,n^*)}^R} < \frac{|h_k|^2}{\sigma^2}$  \textbf{and} $\gamma_{n,(k^*,n)}^* < \gamma_{n,(k',n)}^*$}
				\State{flag = False;}
				\Else
				\State{flag = True;}
				\EndIf
				\EndIf
				
				\If{flag  \textbf{and}  $\underline{\Delta}_{(k^*,n^*)} > \underline{\Delta}_{(k',n^*)}$}
				\State{$\underline{\Delta}_{(k^*,n^*)}^{*} = \underline{\Delta}_{(k_0,n^*)}$}
				\State{break;}
				\Else
				\State{$k^* = k'$}
				\EndIf
				
			}
			\EndWhile
		\end{algorithmic}
	\end{algorithm}

	%%%%%%%%%%%%%%%%%%%%%%%%%%%%%%%%%%%%%%%%%%%%%%%%%%%%%%%%%%%%%%%%%%%%%%%%%%%%%%%%%%%%%%%%%%%%%%%%%%%%%%%
	\section{Numerical Results}
	\label{sec:simulation}
	
	\subsection{Blind Classification Error Probability}
	
	In Section \ref{sec:ergodic_capacity_analysis}, the blind classification error probabilities were analytically determined, when the BS knows the instantaneous channel gains.
	In Fig. \ref{Fig:ClassificationErrorProb}, we compare the classification error probabilities obtained by the analytical method and Monte Carlo simulation.
	Assume that the non-SIC user and SIC user are modulated by QPSK and 16-QAM, respectively, and that $\gamma_n = 0.24$.
	Further, the $L=1$ data sample is used for blind signal classification.
	Fig. \ref{Fig:ClassificationErrorProb} shows the plots of two classification error types, $P \{ \hat{\mathcal{H}}_N|\mathcal{H}_S \}$ and $P \{\hat{\mathcal{H}}_S|\mathcal{H}_N \}$, versus the received SNR.
	As we noted before, the probability of the SIC user being incorrectly classified as the non-SIC user, $P \{\hat{\mathcal{H}}_N|\mathcal{H}_S \}$, decreases as the SNR increases, but the probability of the non-SIC user classifying itself as the SIC user, $P \{ \hat{\mathcal{H}}_S|\mathcal{H}_N \}$, increases with the received SNR.
	If $P_t = 0.1$, the SNR of the non-SIC user would be smaller than 5 dB, but that of the SIC user should be larger than approximately 23 dB, as shown in Fig. \ref{Fig:ClassificationErrorProb}.
	In addition, the analytical results of $P \{\hat{\mathcal{H}}_N | \mathcal{H}_S \}$ are very similar to the simulation results.
	The analytical results are based on (\ref{eq:P_1-final}); thus, we can conclude that the approximation used in (\ref{eq:P_1-final}) is somewhat reliable.
	The graphs of $P \{\hat{\mathcal{H}}_S | \mathcal{H}_N \}$ show a small difference, but they are still similar. 
	As mentioned in Section \ref{sec:ergodic_capacity_analysis}, the high-SNR approximation cannot be applied to compute $P \{\hat{\mathcal{H}}_S | \mathcal{H}_N \}$, but we further approximate it by (\ref{eq:P_2-approx_bf}). This approximation explains the differences between the analytical and simulation results.
	
	\begin{figure} [t!]
		\centering
		\includegraphics[width=0.42\textwidth]{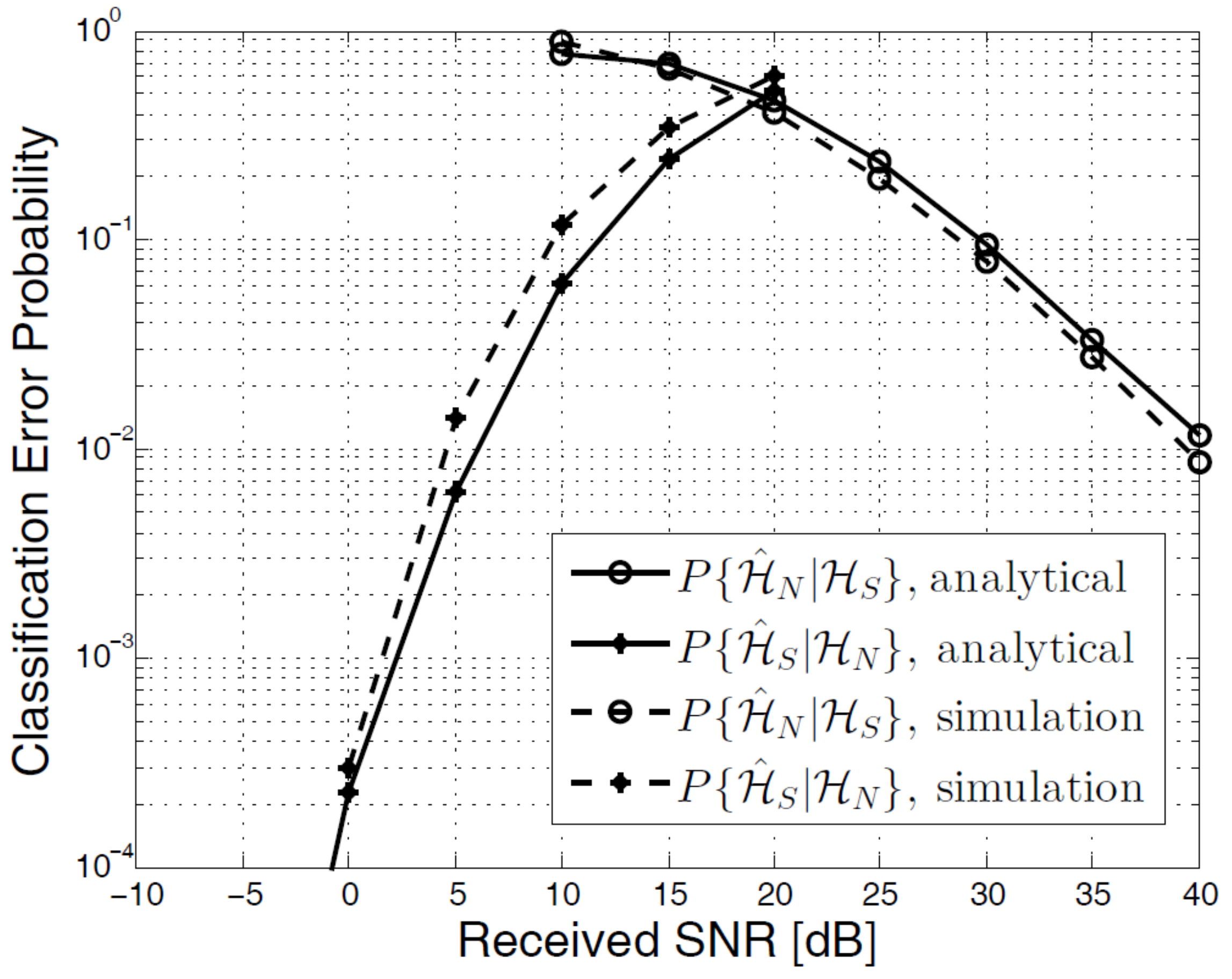}
		\caption{Classification error probabilities obtained by the analytical method and simulation}
		\label{Fig:ClassificationErrorProb}
	\end{figure}
	
	\subsection{Sum-Rate Gain of NOMA over OMA}
	
	To verify that Algorithm \ref{algo} works well for the joint optimization problem of user scheduling and power allocation for NOMA transmissions, the sum-rate gain of NOMA over OMA is compared with those of other scheduling methods.
	Consider the cellular downlink model in Fig. \ref{Fig:SystemModel}.
	$K=40$ users are randomly located in the circular region of radius $r=50$ centered at the BS.
	The BS chooses two users among $K$ users for NOMA transmissions to provide an increased data rate compared to OMA while guaranteeing the minimum data rate constraint and reliable blind signal classification.
	For the simulation, $R_t = 0.8$, $P_t = 10^{-2}$, a transmit SNR of 10 dB, and $L=5$ are used unless otherwise noted.
	
	According to \cite{NOMA-UserSchedule:Ding:TVT2016}, a pair of users having the strongest and weakest channel gains is the best in an F-NOMA system. 
	On the other hand, the performance gain of NOMA over OMA could be maximized when the users with the strongest and second-strongest channel gains are paired for a CR-NOMA system.
	In CR-NOMA, it is important to guarantee the weak user's QoS, and thus it is consistent with the minimum data rate constraint in (\ref{eq:const-minimum-rate}). 
	We compare those two methods with the proposed algorithm.
	Along with the proposed algorithm, the user with the strongest channel condition, i.e., user $n^*=1$, is always chosen as the SIC user for those comparison schemes; hence, what remains is to find the appropriate non-SIC user.
	Since \cite{NOMA-UserSchedule:Ding:TVT2016} does not consider blind signal classification, we assume that the comparison schemes find the non-SIC user in the user set $\mathcal{U}_P$, whose members can guarantee both the constraints of the minimum data rate and maximum classification error probability.
	In other words, when user $n^*$ and any user $k \in \mathcal{U}_P$ are scheduled as the SIC user and non-SIC user, respectively, $\mathcal{I}_R \cap \mathcal{I}_P \neq \phi$ is satisfied, and there exists a $\gamma_n \in [0,1]$ satisfying both the constraints of the data rate and classification performance.
	%Further, the comparison methods find the optimal power allocation ratio by Theorem \ref{thm:optPower} after user scheduling.
	In summary, the comparison methods of user scheduling are explained as follows:
	\subsubsection{Strongest--Strongest}
	The BS schedules the user having the second-strongest channel as the non-SIC user among the user set $\mathcal{U}_P$.
	If $\mathcal{U}_P$ includes all $K$ users, $n^* = 1$, and $k^* = 2$.
	
	\subsubsection{Strongest--Weakest}
	The BS schedules the user having the weakest channel gains as the non-SIC user among the user set $\mathcal{U}_P$.
	If $\mathcal{U}_P$ includes all $K$ users, $n^* = 1$, and $k^* = K$.
	
	The above comparison methods also find the optimal power allocation ratio by Theorem \ref{thm:optPower} after user scheduling.
	However, the main difference between the proposed algorithm and the above two methods is the dependency between the problems of user scheduling and power allocation.
	The steps of user scheduling and finding the power allocation for NOMA are independent in the above comparison methods, but the proposed algorithm iteratively finds the appropriate non-SIC user with consideration of the optimal power allocation corresponding to the iterative decision of user scheduling.
	
	Fig. \ref{Fig:CapGain_SNRdBList} shows plots of the sum-rate gain of NOMA over OMA versus the transmit SNR.
	We can easily see that the proposed algorithm gives the best performance compared to ``Strongest--Strongest" and ``Strongest--Weakest."
	The sum-rate gains of the proposed algorithm and ``Strongest--Weakest" increase as the transmit SNR increases, but that of ``Strongest-Strongest" maintains almost the same value for all transmit SNRs. 
	%According to \cite{NOMA-UserSchedule:Ding:TVT2016}, ``Strongest--Strongest" is advantageous for CR-NOMA, which tries to guarantee the weak user's QoS, i.e., the minimum data rate. 
	Since ``Strongest--Strongest" chooses the user with the second-strongest channel gain as the non-SIC user, %and classification of the non-SIC user is not conducted very well with a high SNR, the advantage of ``Strongest--Strongest" is significantly degraded when blind signal classification is conducted.
	the non-SIC user is highly likely to not satisfy the classification error constraint.
	Thus, the sum-rate of ``Strongest--Strongest" does not increase with the transmit SNR.
	%In addition, it is highly likely that the classification error constraint is not satisfied, thus its sum rate does not increase with the transmit SNR.
	
	\begin{figure} [t!]
		\centering
		\includegraphics[width=0.42\textwidth]{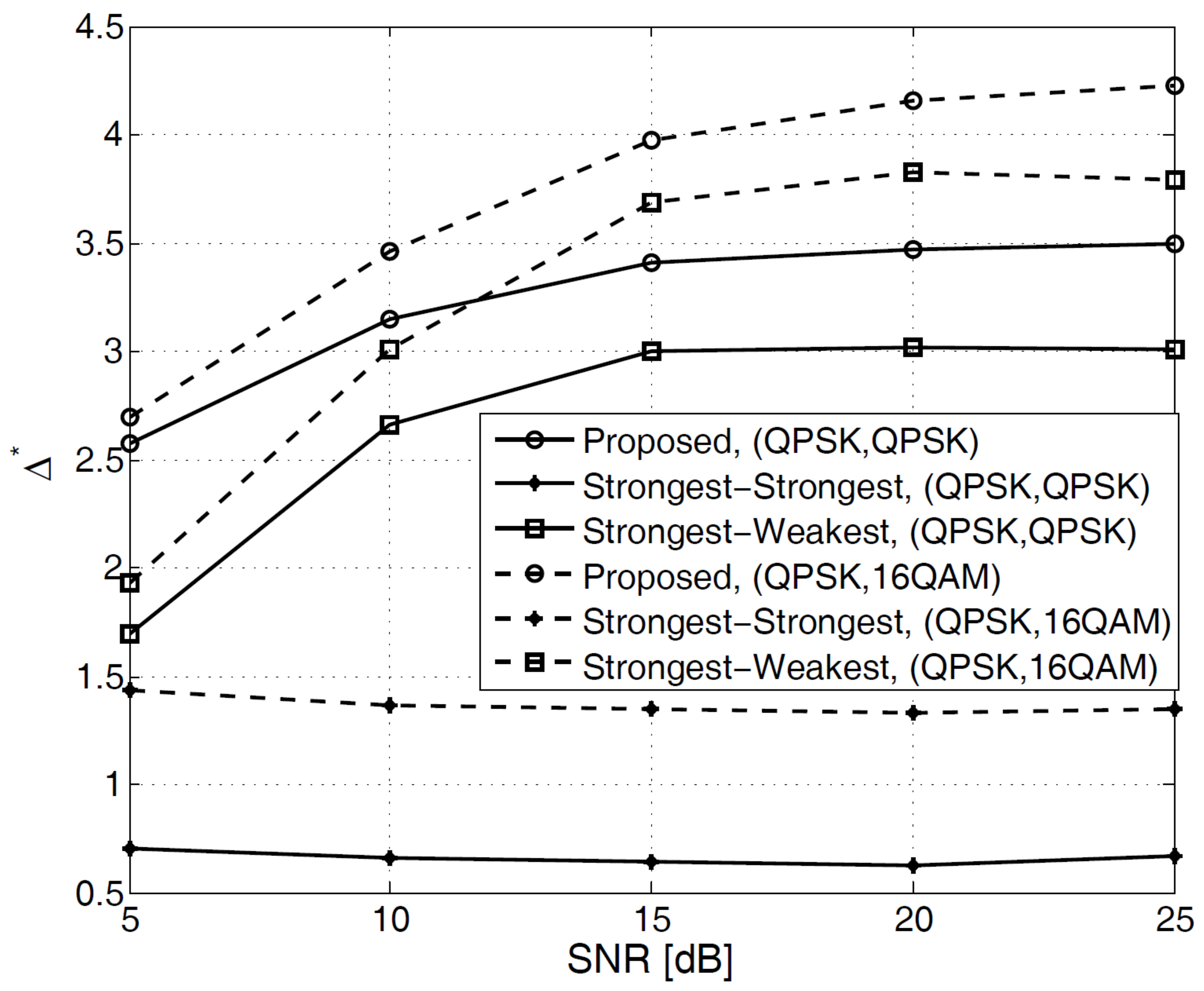}
		\caption{Sum-rate gain of NOMA over OMA versus the transmit SNR}
		\label{Fig:CapGain_SNRdBList}
	\end{figure}
	
	Meanwhile, the performances of the proposed algorithm and that of ``Strongest--Weakest" tend to converge as the SNR increases.
	The high transmit SNR is beneficial for the data rate but not for classification of the non-SIC user, $P_k\{\hat{\mathcal{H}}_N | \mathcal{H}_S\}$.
	This means that the classification error constraint is tighter than the minimum data rate constraint in the high-SNR region, and thus the optimal solution is heavily dominated by the classification performance.
	This situation corresponds to Case 3 in Section \ref{subsec:user_scheduling} because the tight constraint of the classification error makes $\gamma_{n,(k,n)}^* = \gamma_{n,(k,n)}^P$, and the high SNR usually satisfies $\frac{1-2\gamma_n-2P_t-2\gamma_n P_t}{\gamma_n} < \frac{|h_k|^2}{\sigma^2}$. 
	In Case 3, the scheduling of the user with a lower SNR as the non-SIC user is preferred, and therefore the performance gap between the proposed method one and ``Strongest--Weakest" decreases as the SNR increases.
	If the transmit SNR is larger than a certain threshold, the number of users satisfying the classification error constraint, i.e., $|\mathcal{U}_P|$, is reduced, and $\underline{\Delta}^*$ of ``Strongest--Weakest" decreases with SNR$>$20 dB in Fig. \ref{Fig:CapGain_SNRdBList}.
	On the other hand, the proposed scheme shows a tendency to increase more, even with a high SNR, compared to ``Strongest--Weakest," because it is still possible for Case 2 to depend on the random fading channel, although most situations with a high SNR correspond to Case 3.
	In other words, the high SNR still satisfies $\frac{1-2\gamma_n-2P_t-2\gamma_n P_t}{\gamma_n} < \frac{|h_k|^2}{\sigma^2}$, but a deep fade makes the optimal power allocation dominated by the minimum data constraint, $\gamma_{n,(k,n)}^* = \gamma_{n,(k,n)}^R$.
	In Case 2, the user with a higher SNR is preferred for the non-SIC user, and therefore the proposed algorithm can find better user scheduling than ``Strongest--Weakest."
	
	On the other hand, the performance gap between the proposed scheme and ``Strongest--Weakest" increases in the low-SNR region. 
	The low SNR is likely to satisfy $\frac{1-2\gamma_n-2P_t-2\gamma_n P_t}{\gamma_n} > \frac{|h_k|^2}{\sigma^2}$. 
	Moreover, since both $R_k$ and $P_k\{\hat{\mathcal{H}}_N|\mathcal{H}_S\}$ decrease as the SNR decreases, the optimal solution is usually dominated by the minimum data rate constraint, unlike the situation in the high-SNR region.
	Accordingly, most situations in the low SNR region correspond to Case 1. 
	In this case, the scheduling of the user with stronger channel conditions as the non-SIC user makes $\underline{\Delta}^*$ larger.
	Thus, a larger performance gain between the proposed scheme and ``Strongest--Weakest" is obtained with lower transmit SNRs.
	%Thus, the proposed algorithm can provide a better performance gain over ``Strongest--Weakest" which always chooses the weakest one as the non-SIC user with lower transmit SNRs.
	
	We also observe the effect of modulation in Fig. \ref{Fig:CapGain_SNRdBList}, where the solid and dashed lines indicate the results obtained when the SIC user is modulated by 16-QAM and QPSK, respectively, and QPSK is used for the non-SIC user.
	The performance trends of the comparison technologies are generally similar, but it can be seen that the improvement in the data rate with increasing SNR is less when the SIC user is modulated by QPSK compared to the case of 16-QAM.
	%As the transmit SNR increases, the data rate constraint becomes looser and the classification error constraint is tighter. 
	%Therefore, the influence of the increasing SNR is clearly shown when the data rate constraint is tight.
	According to (\ref{eq:ML_SICuser}) and (\ref{eq:ML_nonSICuser}), the classification of the SIC user becomes easier, and the classification of the non-SIC user becomes more difficult when the composite constellation $\chi$ and non-SIC user's constellation $\chi_k$ are more clearly distinguishable from one another.
	A large minimum distance between $\chi$ and $\chi_k$ can be obtained with the small modulation order of the SIC user; thus, $P_n \{ \hat{\mathcal{H}}_N | \mathcal{H}_S \}$ is lower, but $P_k \{ \hat{\mathcal{H}}_S | \mathcal{H}_N \}$ is higher with QPSK for the SIC user.
	The higher $P_k \{ \hat{\mathcal{H}}_S | \mathcal{H}_N \}$ makes the classification error constraint tighter, and hence the advantage of increasing the SNR is not shown clearly when QPSK is used for the SIC user compared to the case of 16-QAM.
	
	\subsection{System Parameters $L$ and $P_t$}
	
	Figs. \ref{Fig:CapGain_sampleList} and \ref{Fig:CapGain_PtList} show plots of $\underline{\Delta}^*$ versus the number of data samples of $L$ and $P_t$, respectively.
	We can easily see that $\underline{\Delta}^*$ of ``Strongest--Strongest" decreases with $L$, which means that blind signal classification does not work well in ``Strongest--Strongest" owing to the high SNR of the non-SIC user; i.e., $P_k \{ \hat{\mathcal{H}}_S | \mathcal{H}_N \} > 0.5$.
	On the other hand, $\Delta^*$ of the proposed scheme and ``Strongest--Weakest" increases as $L$ grows, and therefore those user scheduling techniques can conduct blind signal classification well.
	The sum-rate gains of the proposed method and ``Strongest--Weakest" converge, and so Fig. \ref{Fig:CapGain_sampleList} can be used to determine how many data samples are enough for blind signal classification. 
	When QPSK and 16-QAM are used for the non-SIC user and SIC user, respectively, $L=3$ is sufficient to boost the sum-rate gain. %guarantee reliable performance for blind signal classification. 
	Thus, excessive complexity is not necessary for blind signal classification.
	
	\begin{figure}[t!]
		\minipage{0.45\textwidth}
		\includegraphics[width=\linewidth]{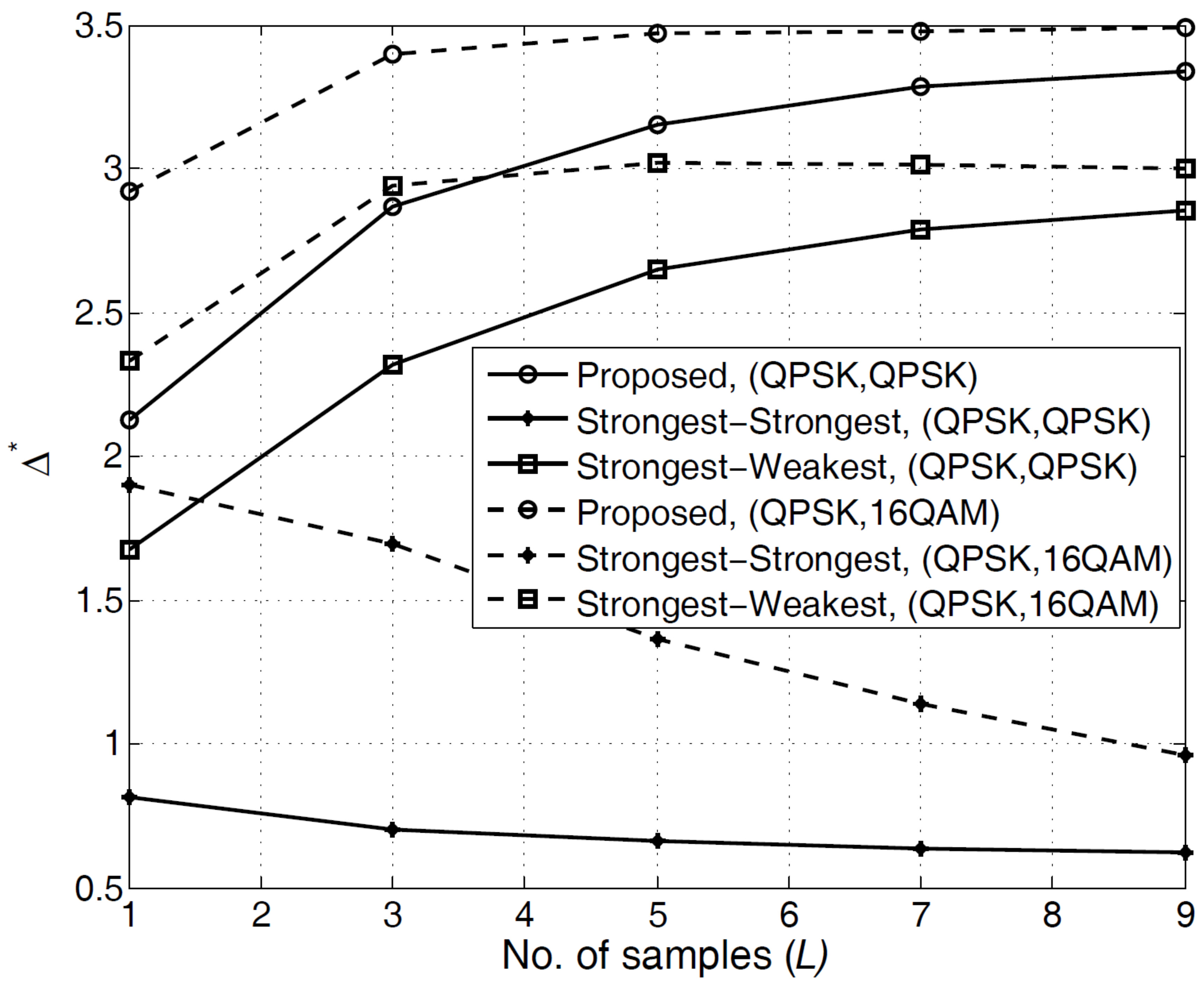}
		\caption{Sum-rate gain of NOMA over OMA versus the number of data samples} 
		\label{Fig:CapGain_sampleList}
		\endminipage\hfill
		\minipage{0.45\textwidth}
		\includegraphics[width=\linewidth]{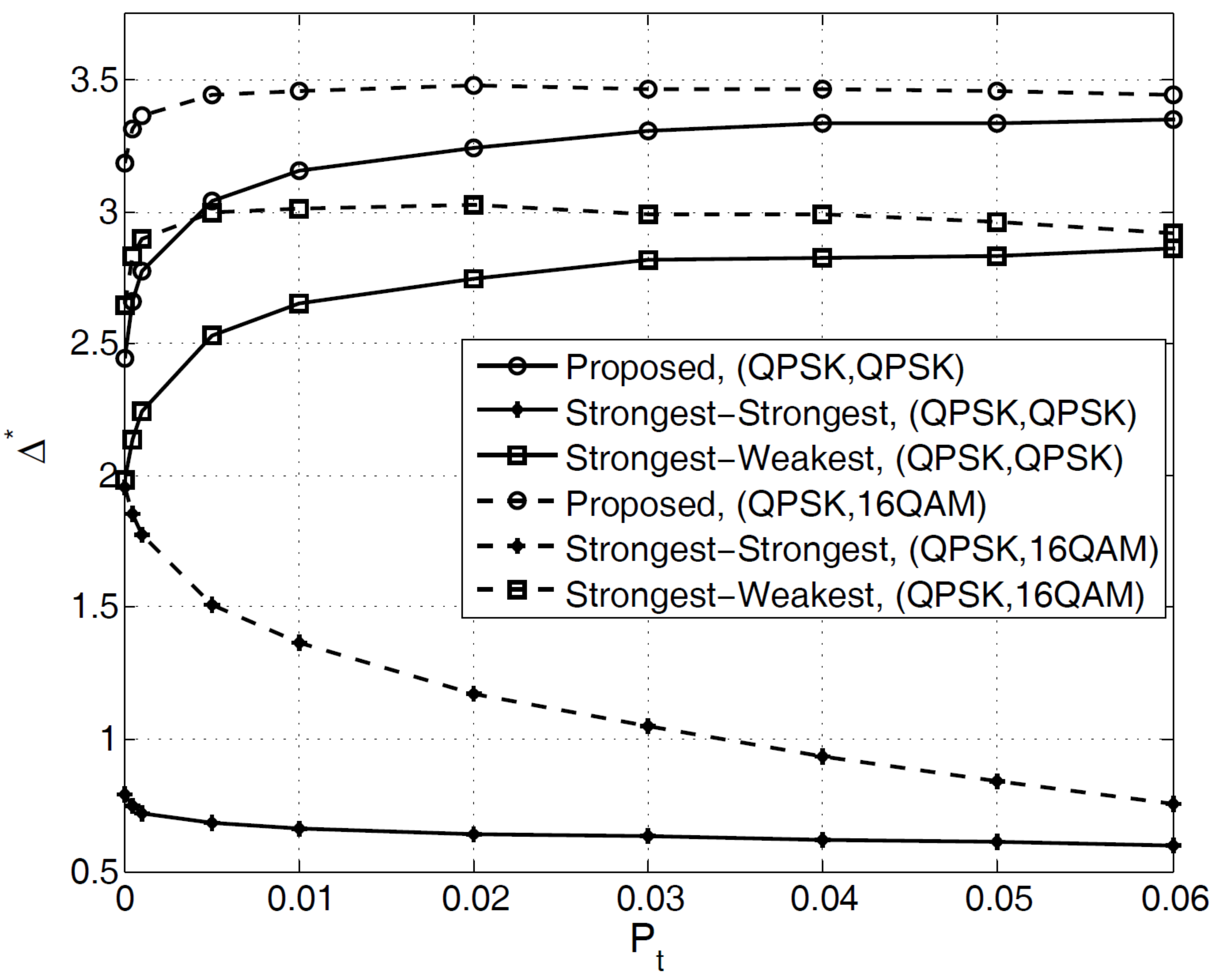}
		\caption{Sum-rate gain of NOMA over OMA versus $P_t$}
		\label{Fig:CapGain_PtList}
		\endminipage
	\end{figure}
	
	In addition, the system designer could choose the appropriate value of $P_t$ by observing the trade-off between the data rate and the performance of blind signal classification.
	As mentioned before, if $P_t$ is too small, the non-SIC user with a very low SNR is scheduled to satisfy $P_t$; therefore, the data rate decreases. 
	On the other hand, if the classification constraint becomes loose, i.e., $P_t$ is large, incorrect signal classification occurs more frequently, and the data rate slowly decreases.
	From Fig. \ref{Fig:CapGain_PtList}, the appropriate value of $P_t$ can be found; e.g., $P_t=0.01$ when QPSK and 16-QAM are used for the non-SIC user and SIC user, respectively.
	Note that even though the sum-rate gain of NOMA over OMA remains constant for a certain interval of $P_t$, a small $P_t$ is preferred to minimize the occurrence of classification failure.
	
	%%%%%%%%%%%%%%%%%%%%%%%%%%%%%%%%%%%%%%%%%%%%%%%%%%%%%%%%%%%%%%%%%%%%%%%%%%%%%%%%%%%%%%%%%%%%%%%%%%%%%%%
	\section{Conclusion}
	\label{sec:conclusion}
	
	This paper mainly focuses on the blind classification of NOMA signals for the presence of interference to determine whether or not a user should perform SIC, without any useful high-layer signaling.
	The error probabilities of blind signal classification are mathematically derived for the cases when the SIC user is classified as the non-SIC user and when the non-SIC user classifies itself as the SIC user.
	On the basis of analytical results, we formulate the joint optimization problem of user scheduling and power allocation to maximize the sum-rate gain of NOMA over OMA subject to the constraints of the maximum classification error probability and minimum data rate.
	To solve this problem, we investigate the effects of blind signal classification on user scheduling and power allocation for NOMA users.
	An iterative algorithm is then proposed for user scheduling and for finding the power allocation ratios of the scheduled users.
	Finally, numerical results confirm that the proposed scheme gives better sum-rate gains over OMA compared to conventional user scheduling methods.

	\begin{IEEEbiography}[{\includegraphics[width=1in,height=1.25in,clip,keepaspectratio]{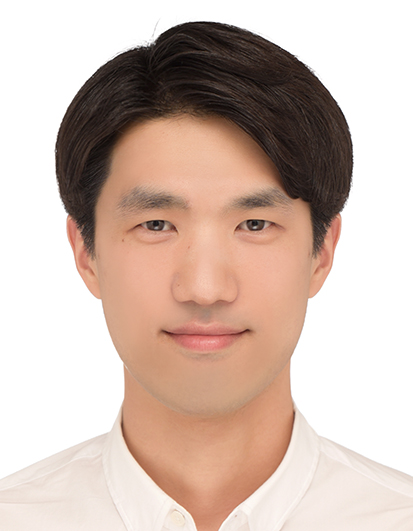}}]{Minseok Choi}
		received the BS and MS degrees in electrical engineering from the Korea Advanced Institute of Science and Technology (KAIST), Daejeon, Republic of Korea, in 2011 and 2013, respectively. He is currently pursuing the PhD degree in KAIST. His research interests include wireless caching network, NOMA, 5G communications, and stochastic network optimization.\end{IEEEbiography}
	\begin{IEEEbiography}[{\includegraphics[width=1in,height=1.25in,clip,keepaspectratio]{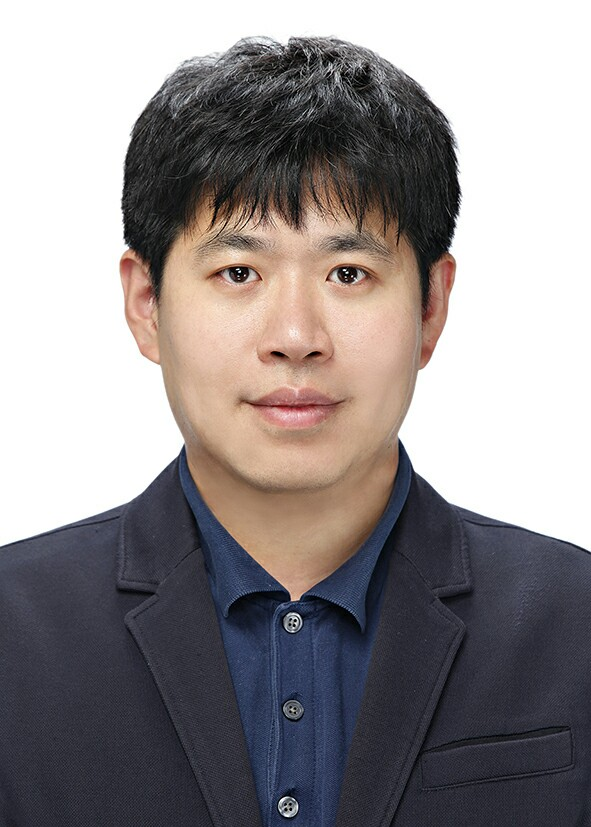}}]{Joongheon Kim}
		(M'06--SM'18) has been an assistant professor with Chung-Ang University, Seoul, Korea, since 2016. He received his BS (2004) and MS (2006) degrees from Korea University, Seoul, Korea; and his PhD (2014) degree from the University of Southern California (USC), Los Angeles, CA, USA. In industry, he was with the LG Electronics Seocho R\&D Campus (Seoul, Korea, 2006--2009), InterDigital (San Diego, CA, USA, 2012), and Intel Corporation (Santa Clara, CA, USA, 2013--2016). 
		
		He is a senior member of the IEEE; and a member of IEEE Communications Society. He was awarded the Annenberg Graduate Fellowship with his PhD admission from USC (2009).
	\end{IEEEbiography}\vfill

\begin{thebibliography}{1}
		
		\bibitem{5G:Andrews:JSAC2014}
		J. G. Andrews, \textit{et al.}, ``What Will 5G Be?," \textit{IEEE Journal on Selected Areas in Communications}, vol. 32, no. 6, pp. 1065--1082, Jun. 2014.
		
		\bibitem{NOMA_5G:Dai:CommMag2015}
		L. Dai, B. Wang, Y. Yuan, S. Han, C. l. I, and Z. Wang, ``Non-Orthogonal Multiple Access for 5G: Solutions, Challenges, Opportunities, and Future Research Trends," {\it{ IEEE Communications Magazine,}} vol. 53, no. 9, pp. 74--81, Sept. 2015.
		
		\bibitem{NOMA-SystemPerf:Saito:PIMRC}
		Y. Saito, A. Benjebbour, Y. Kishiyama, and T. Nakamura, ``System-Level Performance Evaluation of Downlink Non-Orthogonal
		Multiple Access (NOMA)," in \textit{Proc. IEEE PIMRC}, London, UK, Sept. 2013, pp. 611--615.
		
		%\bibitem{NOMA_5G:Ding:arXiv}
		%Z. Ding, et al. ``A Survey on Non-Orthogonal Multiple Access for 5G Networks: Research Challenges and Future Trends." {\it{arXiv preprint arXiv:1706.05347}} (2017).
		
		\bibitem{MuST:3GPP}
		3GPP, ``Study on downlink multiuser superposition transmission for LTE," \textit{3GPP TR 36.859}, Jan. 2016.
		
		\bibitem{Book:Tse}
		D. Tse and P. Viswanath, \textit{Fundamentals of wireless communication}, Cambridge University Press, 2005.
		
		\bibitem{NOMA-const:J-STSP2016Dong}
		Z. Dong, Y. Zhang, J. Zhang, and X. Gao, ``Quadrature Amplitude Modulation Division for Multiuser MISO Broadcast Channels," \textit{IEEE Journal of Selected Topics in Signal Processing}, vol. 10, no. 8, pp. 1551--1566, Dec. 2016.
		
		%\bibitem{NOMA_randomDeploy:Ding:SPL2014}
		%Z. Ding, Z. Yang, P. Fan, and H. V. Poor, ``On the Performance of Nonorthogonal Multiple Access in 5G Systems with Randomly Deployed Users," \textit{IEEE Signal Processing Letters}, vol. 21, no. 12, pp. 1501--1505, Dec. 2014.
		
		\bibitem{MIMO-NOMA:Ding:TWC2016}
		Z. Ding, F. Adachi, and H. V. Poor, ``The Application of MIMO to Nonorthogonal Multiple Access," \textit{IEEE Trans. Wireless Commun.}, vol. 15, no. 1, pp. 537--552, Jan. 2016.
		
		\bibitem{MIMO-NOMA:Ding:TWC2016-2}
		Z. Ding, R. Schober, and H. V. Poor, ``A General MIMO Framework for NOMA Downlink and Uplink Transmission based on Signal Alignment," \textit{IEEE Trans. Wireless Commun.}, vol. 15, no. 6, pp. 4438--4454, Jun. 2016.
		
		\bibitem{MIMO-NOMA:Sun:WCL2015}
		Q. Sun, S. Han, C.-L. I, and Z. Pan, ``On the Ergodic Capacity of MIMO NOMA Systems," \textit{IEEE Wireless Communication Letters}, vol. 4, no. 4, pp. 405--408, Aug. 2015.
		
		\bibitem{NOMA-CoMP:Choi:CL2014}
		J. Choi, ``Non-Orthogonal Multiple Access in Downlink Coordinated Two-Point Systems," \textit{IEEE Communication Letters}, vol. 18, no. 2, pp. 313--316, Feb. 2014.
		
		\bibitem{NOMA-CoMP:ICC-Han}
		D.-J. Han, M. Choi, and J. Moon, ``NOMA in Distributed Antenna System for Max-Min Fairness and Max-Sum-Rate", available at \textit{arXiv:1706.05314}
		
		\bibitem{NOMA-Coop:Ding:CL2015}
		Z. Ding, M. Peng, and H. V. Poor, ``Cooperative Non-Orthogonal Multiple Access in 5G Systems," \textit{IEEE Communications Letters}, vol. 19, no. 8, pp. 1462--1465, Aug. 2015.
		
		\bibitem{NOMA-Coop:Choi:ArXiv2018}
		M. Choi, D. Han and J. Moon, ``Bi-Directional Cooperative NOMA Without Full CSIT," \textit{IEEE Transactions on Wireless Communications}, vol. 17, no. 11, pp. 7515--7527, Nov. 2018.
		
		%\bibitem{NOMA-Coop-SWIPT:Liu:JSAC2016}
		%Y. Liu, Z. Ding, M. Eikashlan, and H. Poor, ``Cooperative Non-Orthogonal Multiple Access with Simultaneous Wireless Information and Power Transfer," \textit{IEEE Journal on Selected Areas in Communications}, vol. 34, no. 4, pp. 938--953, Apr. 2016.
		
		%\bibitem{NOMA-SWIPT:TSP2017Xu}
		%Y. Xu, \textit{et. al.}, ``Joint Beamforming and Power-Splitting Control in Downlink Cooperative SWIPT NOMA Systems," \textit{IEEE Transactions on Signal Processing}, vol. 65, no. 18, pp. 4874--4886, Sept. 2017.
		
		\bibitem{NOMA:UserFairness:SPL-Timotheou}
		S. Timotheou and I. Krikidis, ``Fairness for Non-Orthogonal Multiple Access in 5G Systems," \textit{IEEE Signal Processing Letters}, vol. 22, no. 10, pp. 1647--1651, Oct. 2015.
		
		\bibitem{BlindSC:ArXiv2018Choi}
		M. Choi, D. Yoon, and J. Kim, ``Blind Signal Classification for Non-Orthogonal Multiple Access in Internet-of-Things Networks", available at \textit{arXiv:1807.07711}
		
		\bibitem{3GPP-TSG-RAN:Intel}
		3GPP TSG-RAN WG1, ``R1-164153 On donwlink control signaling for MUST,"  Meeting \#85, Nanjing, China, May 2016.
		
		\bibitem{3GPP-TSG-RAN:MediaTek}
		3GPP TSG-RAN WG1, ``R1-164979 Blind Detection Evaluation on Interference Existence for Case 1," Meeting \#85, Nanjing, China, May 2016.
		
		\bibitem{MC:TC2000Wei}
		W. Wei and J. M. Mendel, ``Maximum-Likelihood Classification for Digital Amplitude-Phase Modulations," \textit{IEEE Transactions on Communications}, vol. 48, no. 2, pp. 189--193, Feb. 2000.
		
		\bibitem{MC:TWC2009Hameed}
		F. Hameed, O. A. Dobre, and D. C. Popescu, ``On the Likelihood-based Approach to Modulation Classification," \textit{IEEE Transactions on Wireless Communications}, vol. 8, no. 12, pp. 5884--5892, Dec. 2009.
		
		\bibitem{MC:CL2013Soltanmohammadi}
		E. Soltanmohammadi and M. Naraghi-Pour, ``Blind Modulation Classification over Fading Channels using Expectation-Maximization," \textit{IEEE Communications Letters}, vol. 17, no. 9, pp. 1692--1695, Sept. 2013.
		
		\bibitem{NOMA-UserSchedule:Ding:TVT2016}
		Z. Ding, P. Fan, and H. V. Poor, ``Impact of User Pairing on 5G Nonorthogonal Multiple-Access Downlink Transmissions," \textit{IEEE Trans. Vehicular Technology}, vol. 65, no. 8, pp. 6010--6023, Aug. 2016.
		
		\bibitem{NOMA-US:WCL2018Kang}
		J. Kang and I. Kim, "Optimal User Grouping for Downlink NOMA," \textit{IEEE Wireless Communications Letters}, vol. 7, no. 5, pp. 724--727, Oct. 2018.
		
		\bibitem{NOMA-PowerAllocation:Yang:TWC2016}
		Z. Yang, Z. Ding, P. Fan, and N. Al-Dhahir, ``A General Power Allocation Scheme to Guarantee Quality of Service in Downlink and Uplink NOMA Systems," \textit{IEEE Transactions on Wireless Communications}, vol. 15, no. 11, pp. 7244--7257, Nov. 2016.
		
		\bibitem{NOMA-PowerAllocation:Choi:TWC2016}
		J. Choi, ``On the Power Allocation for MIMO-NOMA Systems with Layered Transmissions," \textit{IEEE Trans. Wireless Commun.}, vol. 15, no. 5, pp. 3226--3237, May 2016.
		
		\bibitem{NOMA-PowerAllocation:Choi:CL2016}
		J. Choi, ``Power Allocation for Max-Sum Rate and Max-Min Rate Proportional Fairness in NOMA," \textit{IEEE Communications Letters}, vol. 20, no. 10, pp. 2055--2058, Oct. 2016.
		
		\bibitem{NOMA-PowerAllocation:Di:TWC2016}
		B. Di, L. Song, and Y. Li, "Sub-Channel Assignment, Power Allocation, and User Scheduling for Non-Orthogonal Multiple Access Networks," \textit{IEEE Transactions on Wireless Communications}, vol. 15, no. 11, pp. 7686--7698, Nov. 2016.
		
		\bibitem{NOMA-ResourceAllocation:Fang:TC2016}
		F. Fang, H. Zhang, J. Cheng, and V. C. M. Leung, ``Energy-Efficient Resource Allocation for Downlink Non-Orthogonal Multiple Access Network," \textit{IEEE Transactions on Communications}, vol. 64, no. 9, pp. 3722--3732, Sept. 2016.
		
		\bibitem{NOMA-ResourceAllocation:Zhu:JSAC2017}
		J. Zhu, J. Wang, Y. Huang, S. He, X. You, and L. Yang, ``On Optimal Power Allocation for Downlink Non-Orthogonal Multiple Access Systems," \textit{IEEE Journal on Selected Areas in Communications}, vol. 35, no. 12, pp. 2744--2757, Dec. 2017.
		
		\bibitem{NOMA-PA:SPL2016Cui}
		J. Cui, Z. Ding, and P. Fan, ``A Novel Power Allocation Scheme Under Outage Constraints in NOMA Systems," \textit{IEEE Signal Processing Letters}, vol. 23, no. 9, pp. 1226--1230, Sept. 2016.
		
		\bibitem{NOMA-US:TC2017Liang}
		W. Liang, Z. Ding, Y. Li and L. Song, ``User Pairing for Downlink Non-Orthogonal Multiple Access Networks Using Matching Algorithm," \textit{IEEE Transactions on Communications}, vol. 65, no. 12, pp. 5319--5332, Dec. 2017.
		
		\bibitem{NOMA-US:TWC2018Cui}
		J. Cui, Y. Liu, Z. Ding, P. Fan and A. Nallanathan, ``Optimal User Scheduling and Power Allocation for Millimeter Wave NOMA Systems," \textit{IEEE Transactions on Wireless Communications}, vol. 17, no. 3, pp. 1502--1517, March 2018.
		
		\bibitem{NOMA-US:WCL2018Zhu}
		L. Zhu, J. Zhang, Z. Xiao, X. Cao and D. O. Wu, ``Optimal User Pairing for Downlink Non-Orthogonal Multiple Access (NOMA)," \textit{IEEE Wireless Communications Letters}, July 2018.
		
		\bibitem{ArXiv2018Choi}
		M. Choi, J. Kim, and J. Moon, ``Dynamic Power Allocation and User Scheduling for Power-Efficient and Low-Latency Communications," available at \textit{arXiv:1807.00682}
		
		\bibitem{deferred_acceptance}
		A. Roth and M. Sotomayor, \textit{Two-Sided Matching: A Study in Game-Theoretic Modeling and Analysis.} Cambridge U.K.:
		Cambridge, Univ. Press, 1992.
		
	\end{thebibliography}
\end{document}